\documentclass[preprint,12pt]{elsarticle}

\usepackage{amsmath,amssymb,amsfonts,amsthm}
\usepackage{graphicx,booktabs,multirow}
\usepackage[hidelinks]{hyperref}
\graphicspath{{figures/}}
\usepackage[section]{placeins}
\usepackage{subcaption}
\usepackage{float}
\usepackage{tikz}

\newcommand{\qscale}[2]{%
\begin{tikzpicture}[x=11.5cm/#1,y=1.3cm]
\foreach \t in {0,#2,...,#1} {\draw[gray!40,densely dotted] (\t,0)--(\t,1);
  \draw[thick] (\t,-0.06)--(\t,0.06); \node[below,font=\scriptsize] at (\t,-0.06) {\t};}
\draw[gray!60,dashed] (0,1)--(#1,1);
\draw[very thick] (0,0)--(#1,0);
\node[left,font=\scriptsize] at (0,0) {0}; \node[left,font=\scriptsize] at (0,1) {1};
\node[font=\scriptsize] at (-0.06*#1,1.3) {$\mu$};
\end{tikzpicture}\\[2pt]
{\small L = \underline{\hspace{2cm}}\qquad C = \underline{\hspace{2cm}}\qquad U = \underline{\hspace{2cm}}}}

\newcommand{\D}{D^2_{\lambda\rho}}
\newcommand{\Tsc}{\mathcal{T}}
\newcommand{\Osc}{\mathcal{O}}

\newtheorem{definition}{Definition}
\newtheorem{proposition}{Proposition}
\newtheorem{remark}{Remark}
\newtheorem{assumption}{Assumption}

\journal{International Journal of Approximate Reasoning}

\begin{document}

\begin{frontmatter}

\title{Mediation and moderation in distance-based regression with triangular fuzzy data: simulation evidence and an application to academic performance}

\author[mercatorum]{Fabrizio Maturo\corref{cor1}}
\ead{fabrizio.maturo@unimercatorum.it}
\author[sanraffaele,irccs]{Annamaria Porreca}
\ead{annamaria.porreca@uniroma5.it}
\cortext[cor1]{Corresponding author.}

\affiliation[mercatorum]{organization={Department of Economics, Statistics and Business, Universitas Mercatorum},
            city={Rome}, country={Italy}}
\affiliation[sanraffaele]{organization={Department for the Promotion of Human Science and Quality of Life, San Raffaele University},
            city={Rome}, country={Italy}}
\affiliation[irccs]{organization={Unit of Clinical and Molecular Epidemiology, IRCCS San Raffaele Roma},
            city={Rome}, country={Italy}}

\begin{abstract}
We study mediation and moderation in the distance-based regression model for triangular fuzzy responses, in which the center and transformed spreads of the response are linear in the components of fuzzy and crisp covariates. With a common design, componentwise least squares minimizes the unconstrained quadratic extension of the Yang--Ko distance on the transformed scale, for any choice of its shape parameters. Direct, indirect and total effects are defined separately for the center and the spreads of the response as plug-in contrasts on the fitted equations; when the mediator spreads are transformed and carry part of the indirect association, the product-of-coefficients rule no longer holds for the center. Moderation is modeled through triangular interaction regressors and tested with a Freedman--Lane permutation test. In simulations calibrated on real data, percentile bootstrap intervals for the indirect effects had coverage between 0.905 and 0.975, the empirical size of the indirect-effect tests did not exceed 0.085, and ignoring the paths through the mediator spreads biased the indirect effect on the center by 16 to 20\% at every sample size. The fuzzy model predicted better than a centers-only model only from $n=250$. In data on 91 university students collected in 2014, the indirect contrast through a course indicator accounted for about half of the model-based association between prior knowledge and the center of academic performance, the interactions of prior knowledge with sex and with motivation remained after a Holm adjustment, and the spreads of the covariates did not improve leave-one-out prediction.
\end{abstract}


\begin{keyword}
triangular fuzzy numbers \sep fuzzy regression \sep mediation \sep moderation \sep Yang--Ko distance \sep bootstrap
\end{keyword}

\end{frontmatter}

\section{Introduction}

Classical statistical modeling is traditionally grounded on the assumption that observations are precisely measured and that variables can be represented by single numerical values. In many real empirical contexts, however, this assumption is hardly tenable. Imprecision, vagueness and partial knowledge often affect the measurement process, the data collection phase, or even the conceptual definition of the variables under study. When such features are ignored and imprecise data are forcibly reduced to crisp values, a non negligible loss of information may occur, leading to models that provide a simplified and sometimes misleading representation of the underlying phenomena.

The mathematical foundations for dealing with vagueness were laid by \citet{Zadeh1965}, who introduced fuzzy sets as a generalization of classical set theory, while the analytic formalism of the operations on fuzzy sets was studied by \citet{Bellman1973}. Subsequent contributions by \citet{Zadeh1968} and \citet{Zadeh1975} further developed the formal apparatus of fuzzy logic, introducing fuzzy algorithms and linguistic variables as key tools for approximate reasoning. These works established fuzzy logic as a coherent mathematical framework for representing gradual concepts and imprecise information.

A conceptual distinction within this framework concerns the difference between fuzziness and probability. As emphasized by \citet{Kosko1990} and further discussed in \citet{Kosko1993}, probability theory models randomness and stochastic uncertainty, while fuzzy theory addresses vagueness as an intrinsic property of information. This distinction has been systematically analyzed within generalized information theory, see for instance \citet{KlirYuan1995} and \citet{Klir2006}. In this perspective, fuzziness is not interpreted as epistemic uncertainty but as structural imprecision, which persists even in the presence of complete information.

Within fuzzy set theory, fuzzy numbers represent imprecise numerical quantities. Comprehensive treatments of fuzzy sets and fuzzy numbers are provided in the classical monographs by \citet{DuboisPrade1980}, \citet{Zimmermann1996} and \citet{KlirYuan1995}. These contributions show how fuzzy numbers allow one to represent central tendency and dispersion simultaneously, preserving information that would otherwise be lost through crisp approximation. Alternative representations and operations for fuzzy numbers, including bell shaped fuzzy numbers associated with classical distributions, have been proposed and discussed in applied statistical contexts, see \citet{MaturoFortuna2016BellShapedFuzzyNumbers}.

The presence of fuzzy valued data motivates the extension of classical statistical models to accommodate imprecision. In the context of regression analysis, early contributions include fuzzy least squares approaches such as those proposed by \citet{Celmins1987} and \citet{Diamond1988}, which extend the least squares criterion to fuzzy vector data. These ideas were later formalized within a probabilistic framework through the concept of fuzzy random variables introduced by \citet{Puri1986}, and further developed in regression settings by \citet{Nather2006}. Inferential aspects for fuzzy random variables, including variance based testing procedures, are discussed in \citet{Ramos2010}.

A parallel line of research developed fuzzy regression models based on possibilistic semantics and inclusion constraints, as in \citet{Peters1994} and in the monograph by \citet{Tanaka1999}. Although these approaches played an important historical role, they interpret fuzziness mainly through possibility theory and model uncertainty, which is not always appropriate when fuzziness characterizes the observed data themselves rather than the underlying data generating process.

Distance-based fuzzy regression models depart from possibilistic interpretations: in them, fuzzy responses and predictors are treated as primitive data objects. In this framework, estimation is performed by minimizing suitable distance measures between observed and fitted fuzzy quantities, leading to a descriptive statistical interpretation of fuzziness. \citet{CoppiEtAl2006} proposed a least squares regression model for LR fuzzy responses based on the distance of \citet{YangKo1996}. Foundational contributions in this direction are provided by \citet{Blanco2009}, who propose estimation procedures for linear regression models with fuzzy random variables, and by \citet{Ferraro2010}, who introduce a linear regression model for imprecise responses in which the center and suitable invertible transformations of the spreads are modeled separately, so that fitted spreads are positive by construction. A determination coefficient for such models is subsequently developed in \citet{Ferraro2011}, and fuzzy explanatory variables are accommodated in the multiple regression model of \citet{FerraroGiordani2012}. Asymptotic properties of least squares estimators in regression models with vague concepts are investigated by \citet{Kratschmer2006a,Kratschmer2006b}. Comparative analyses between fuzzy and classical statistical regression models are discussed in \citet{Kim1996} and \citet{Wang2000}.

The present work belongs to the distance-based and descriptive tradition of fuzzy regression, in which fuzziness is interpreted as a structural property of the observed data rather than as epistemic uncertainty. This perspective has been adopted in economic and social sciences, where fuzzy regression models and alternative fuzzy operations have been proposed to address complex empirical phenomena involving vagueness, heterogeneity and partial observability, see for instance \citet{MaturoMaturo2013FuzzyRegressionCausality,MaturoHoskovaMayerova2017FuzzyRegression}. In the same descriptive spirit, fuzzy extensions have been proposed for other statistical tools, such as centrality measures for social networks in which the strength of the relations is a fuzzy quantity, with an application to a university department \citep{PorrecaMaturoVentre2025FuzzyCentrality}. A broader methodological discussion on the joint treatment of randomness and vagueness in applied sciences is provided in \citet{Maturo2016FuzzyProbabilisticApproach}.

Mediation and moderation are central in the analysis of relational structures \citep{BaronKenny1986,Hayes2018,VanderWeele2015}. Recent contributions have started to address them for fuzzy data. \citet{Yoon2020FuzzyModeration} introduced fuzzy moderation and moderated mediation models estimated by fuzzy least squares, with an $L_2$-type metric defined through support functions, fuzzy-valued coefficients and indirect effects obtained as products of path coefficients. \citet{ShinEtAl2026FuzzyMultipleModeration} extended this line to multiple moderators with $L_1$ and $L_2$ metrics, estimated by evolutionary algorithms and neural networks, and \citet{KangEtAl2026BootstrapFuzzyMediation} combined fuzzy least squares and fuzzy least absolute deviations with bootstrap inference for fuzzy mediation. These approaches rely on fuzzy arithmetic for the coefficients and estimate the model with optimization algorithms. They differ from the componentwise model of \citet{CoppiEtAl2006} and \citet{Ferraro2010}, in which the coefficients are real numbers, the spreads are modeled on a transformed scale and the estimator has a closed form. To our knowledge, mediation and moderation effects have not yet been formally defined within this componentwise model. In this model the product-of-coefficients rule for indirect effects no longer holds in general once the spreads are transformed.

The contribution of this paper is fivefold.
\begin{enumerate}
\item We show that, with a common design for the three component equations, componentwise least squares exactly minimizes the unconstrained quadratic extension of the Yang--Ko distance between observed and fitted fuzzy responses on the transformed scale, whatever its shape parameters (Proposition~\ref{prop:ls_yk}). This clarifies which distance the estimator minimizes and yields closed-form leave-one-out validation.
\item We define direct, indirect and total effects separately for the center and the spreads of the fuzzy response as plug-in contrasts on the fitted structural equations (Definition~\ref{def:effects}). We characterize when they reduce to the product-of-coefficients rule (Proposition~\ref{prop:decomp}) and treat fuzzy, crisp quantitative and crisp categorical predictors and mediators.
\item We define moderation through interaction regressors that remain triangular fuzzy numbers. We derive the corresponding conditional effects and propose a Freedman--Lane permutation test for the interaction block that uses the same distance as the estimator.
\item We assess finite-sample behavior through a Monte Carlo experiment calibrated on the empirical estimates and a plasmode experiment that resamples the observed covariates, reporting bias, root mean squared error, interval coverage, empirical size and power together with their Monte Carlo standard errors, and we isolate the bias of crisp analyses that omit some or all of the spreads.
\item We illustrate the methodology on data on university students collected in 2014 with a questionnaire based on triangular fuzzy assessments, and provide a single reproducible R pipeline.
\end{enumerate}

The paper is organized as follows. Section~\ref{sec:prelim} introduces fuzzy numbers, the operations used in the paper and the Yang--Ko distance. Section~\ref{sec:model} presents the regression model, the estimator and the tools for model assessment and inference. Section~\ref{sec:mediation} defines mediation effects, and Section~\ref{sec:moderation} defines moderation effects. Section~\ref{sec:simulation} reports the simulation studies, Section~\ref{sec:application} the empirical application, and Section~\ref{sec:discussion} discusses limitations and future research.

\section{Preliminaries on fuzzy sets and fuzzy numbers}
\label{sec:prelim}

Classical logic and classical set theory are based on the assumption that the membership of an element to a set is a binary property. An element either belongs to a set or it does not, and no intermediate states are allowed. While this framework is mathematically rigorous and computationally efficient, it proves inadequate for representing many concepts arising in real data analysis, where boundaries are often vague and gradual transitions are more realistic than sharp thresholds.

Fuzzy logic, introduced by \citet{Zadeh1965}, extends classical logic by allowing degrees of membership ranging continuously in the unit interval. This extension is particularly suitable for modeling linguistic concepts, subjective evaluations and imprecise measurements, which cannot be adequately captured by crisp categories. Fuzziness should not be confused with randomness: membership functions do not represent probabilities and do not satisfy the axioms of probability theory \citep{Kosko1990,Klir2006}. In the statistical framework adopted in this paper, the two notions coexist: the statistical units are randomly sampled, while the value observed on each unit is a fuzzy number \citep{Puri1986,Ferraro2010}.

\subsection{Fuzzy sets}

Let $\mathcal{X}$ be a non empty universe of discourse. A fuzzy set $A$ in $\mathcal{X}$ is defined by a membership function $\mu_A : \mathcal{X} \rightarrow [0,1]$, where $\mu_A(x)$ represents the degree to which $x \in \mathcal{X}$ belongs to $A$. When $\mu_A$ takes only the values $0$ and $1$, the fuzzy set reduces to a crisp set. Union, intersection and complement are commonly defined as $\mu_{A \cup B}(x) = \max\{\mu_A(x), \mu_B(x)\}$, $\mu_{A \cap B}(x) = \min\{\mu_A(x), \mu_B(x)\}$ and $\mu_{A^c}(x) = 1 - \mu_A(x)$.

\subsection{Triangular fuzzy numbers}
\label{subsec:tfn}

A fuzzy number $\tilde{A}$ is a fuzzy set on $\mathbb{R}$ whose membership function is normal, convex, upper semicontinuous and has bounded support \citep{DuboisPrade1980,Zimmermann1996}; its $\alpha$-cuts $\tilde{A}_\alpha=\{x:\mu_{\tilde{A}}(x)\ge\alpha\}$, $\alpha\in(0,1]$, are closed intervals.
An LR fuzzy number is written $\tilde{A}=(a_m,a_l,a_r)_{LR}$, where $a_m$ is the center and $a_l,a_r\ge 0$ are the left and right spreads, and $\mu_{\tilde{A}}(x)=L((a_m-x)/a_l)$ for $x\le a_m$ and $R((x-a_m)/a_r)$ for $x\ge a_m$, with $L$ and $R$ non-increasing shape functions such that $L(0)=R(0)=1$ \citep{DuboisPrade1980}. Triangular fuzzy numbers (TFNs) correspond to $L(z)=R(z)=\max\{0,1-z\}$. For $a_l,a_r>0$ the membership function is
\begin{equation}
\label{eq:tfn}
\mu_{\tilde{A}}(x)=
\begin{cases}
1-\dfrac{a_m-x}{a_l}, & a_m-a_l \le x \le a_m, \\[6pt]
1-\dfrac{x-a_m}{a_r}, & a_m \le x \le a_m+a_r, \\[4pt]
0, & \text{otherwise},
\end{cases}
\end{equation}
with $\alpha$-cuts
\begin{equation}
\label{eq:tfn_alpha}
\tilde{A}_\alpha=\bigl[a_m-(1-\alpha)a_l,\; a_m+(1-\alpha)a_r\bigr],\qquad \alpha\in(0,1].
\end{equation}
Since the spreads may be zero in the application, we take \eqref{eq:tfn_alpha} as the definition of a TFN: it is valid for all $a_l,a_r\ge0$, and a zero spread corresponds to a vertical side of the triangle, where \eqref{eq:tfn} is not defined. The center $a_m$ is the most plausible value, and the spreads quantify the imprecision on each side of it. The support is $[a_m-a_l,\,a_m+a_r]$. Throughout the paper, fuzzy quantities are written as triples (center, left spread, right spread). A crisp value $x$ is the degenerate TFN $(x,0,0)$.

Trapezoidal fuzzy numbers generalize TFNs by allowing a flat core interval, while bell shaped fuzzy numbers provide smoother membership functions. In the present work, TFNs are adopted for both theoretical developments and empirical analysis, as they offer a convenient balance between flexibility and analytical tractability.

\subsection{Operations and distance}
\label{subsec:operations}

For TFNs $\tilde{A}=(a_m,a_l,a_r)$ and $\tilde{B}=(b_m,b_l,b_r)$ and a real number $c$, the extension principle gives
\begin{equation}
\label{eq:sum_scalar}
\tilde{A}\oplus\tilde{B}=(a_m+b_m,\,a_l+b_l,\,a_r+b_r),
\qquad
c\odot\tilde{A}=
\begin{cases}
(c\,a_m,\,c\,a_l,\,c\,a_r), & c\ge 0,\\
(c\,a_m,\,|c|\,a_r,\,|c|\,a_l), & c<0,
\end{cases}
\end{equation}
so that sums and crisp multiples of TFNs are TFNs. The product of two TFNs is not triangular. For positive TFNs ($a_m-a_l\ge 0$, $b_m-b_l\ge 0$), the $\alpha$-cuts of the exact product are
\[
\bigl[(a_m-(1-\alpha)a_l)(b_m-(1-\alpha)b_l),\;(a_m+(1-\alpha)a_r)(b_m+(1-\alpha)b_r)\bigr],
\]
whose endpoints are quadratic in $\alpha$. Dropping the quadratic terms gives the classical triangular approximation \citep{DuboisPrade1980}
\begin{equation}
\label{eq:dp_product}
\tilde{A}\otimes\tilde{B}\approx(a_m b_m,\; a_m b_l+b_m a_l,\; a_m b_r+b_m a_r),
\end{equation}
which preserves the core exactly. Relative to the exact product, the approximation overstates the left spread by $a_l b_l$ and understates the right spread by $a_r b_r$, so that both endpoints of its support lie below the exact ones by these amounts. In particular, the approximate product of two positive TFNs can have a partly negative support: for $\tilde{A}=\tilde{B}=(1,0.8,0.8)$ it is $(1,1.6,1.6)$, with support $[-0.6,2.6]$, whereas the exact support is $[0.04,3.24]$.

Following \citet{CoppiEtAl2006} and \citet{Ferraro2010}, the discrepancy between LR fuzzy numbers is measured by the distance of \citet{YangKo1996},
\begin{equation}
\label{eq:yk}
\D(\tilde{A},\tilde{B})=(a_m-b_m)^2+\bigl[(a_m-\lambda a_l)-(b_m-\lambda b_l)\bigr]^2+\bigl[(a_m+\rho a_r)-(b_m+\rho b_r)\bigr]^2,
\end{equation}
where $\lambda=\int_0^1 L^{-1}(\omega)\,d\omega$ and $\rho=\int_0^1 R^{-1}(\omega)\,d\omega$ account for the shape of the membership function. For TFNs, $\lambda=\rho=1/2$. With $\lambda=\rho=1$, \eqref{eq:yk} compares the three vertices of the triangles, as in the fuzzy least squares criterion of \citet{Diamond1988}.

Distance based measures provide a natural way to quantify the discrepancy between observed and fitted fuzzy quantities and form the basis for fuzzy least squares estimation. Within this framework, fuzzy numbers are treated as data objects endowed with a metric structure, allowing the extension of classical statistical techniques to settings characterized by imprecision.

\section{Distance-based fuzzy regression model}
\label{sec:model}

Regression analysis represents one of the most widely used tools in statistical modeling for studying the relationship between a response variable and one or more explanatory variables. In many empirical contexts, measurements may be affected by imprecision, aggregation, subjectivity or linguistic assessment, leading to observations that are more appropriately represented by fuzzy numbers rather than single values. In this work, we focus on fuzzy regression models in which fuzziness is treated as a structural property of the observed variables, rather than as uncertainty on the model parameters, consistently with the distance based framework of \citet{CoppiEtAl2006}, \citet{Blanco2009}, \citet{Ferraro2010} and \citet{Ferraro2011}.

\subsection{Model specification}
\label{subsec:model}

Let $\tilde{Y}_i=(Y_{im},Y_{il},Y_{ir})$, $i=1,\ldots,n$, be a triangular fuzzy response with $Y_{il},Y_{ir}\ge 0$. Let $\tilde{X}_{ti}=(X_{tim},X_{til},X_{tir})$, $t=1,\ldots,q$, be fuzzy covariates and $T_{1i},\ldots,T_{si}$ crisp covariates. Collect the intercept, the centers and spreads of the fuzzy covariates and the crisp covariates in the row vector
\begin{equation}
\label{eq:design_row}
\begin{aligned}
\mathbf{x}_i^\top=\bigl(&1,\;X_{1im},X_{1il},X_{1ir},\;\ldots,\;X_{qim},X_{qil},X_{qir},\\
&T_{1i},\ldots,T_{si}\bigr)\in\mathbb{R}^{p},\qquad p=1+3q+s,
\end{aligned}
\end{equation}
and let $\mathbf{X}$ be the $n\times p$ design matrix with rows $\mathbf{x}_i^\top$. The model is
\begin{equation}
\label{eq:model}
Y_{im}=\mathbf{x}_i^\top\mathbf{a}_m+\varepsilon_{im},\qquad
g(Y_{il})=\mathbf{x}_i^\top\mathbf{a}_l+\varepsilon_{il},\qquad
h(Y_{ir})=\mathbf{x}_i^\top\mathbf{a}_r+\varepsilon_{ir},
\end{equation}
where $\mathbf{a}_m,\mathbf{a}_l,\mathbf{a}_r\in\mathbb{R}^p$ are real coefficient vectors and $g,h:[0,\infty)\to[0,\infty)$ are invertible transformations. \citet{Ferraro2010} use $g=h=\log$, which requires strictly positive spreads. Since some recorded spreads in the application are exactly zero, we use $g=h=\log(1+\cdot)$, which is finite at zero. Because the linear predictor of a transformed spread can be negative, the back-transformation is followed by a nonnegative projection, which produces admissible, possibly zero, fitted spreads. Predicted spreads are projected by $\pi(\eta)=\max\{0,\exp(\eta)-1\}$. Componentwise least squares remains the exact minimizer of the quadratic transformed-scale objective before projection, but it need not minimize the original-scale Yang--Ko discrepancy after projection. Model \eqref{eq:model} is the specification of \citet{Ferraro2010} with fuzzy covariates entering through their centers and spreads, along the lines of \citet{FerraroGiordani2012}. Different components of a fuzzy covariate may therefore affect the center and the imprecision of the response in different ways. A crisp covariate enters with its value only, because its spreads are zero. In what follows, $\Phi(\tilde{Y}_i)=(Y_{im},g(Y_{il}),h(Y_{ir}))$ denotes the transformed response and $\mathbf{Y}_\Phi$ the $n\times 3$ matrix with rows $\Phi(\tilde{Y}_i)$.

\begin{assumption}
\label{ass:model}
\begin{itemize}
\item[(A1)] The units $(\mathbf{x}_i,\tilde{Y}_i)$ are independent and identically distributed.
\item[(A2)] $E(\varepsilon_{ik}\mid\mathbf{x}_i)=0$ for $k\in\{m,l,r\}$, with finite second moments.
\item[(A3)] $\mathbf{X}$ has full column rank $p$.
\end{itemize}
\end{assumption}

No distributional assumption is imposed on the residual discrepancies: normality and homoscedasticity are not required, and the three residuals of a unit may be correlated. Assumption (A1) refers to the sampling of units. It is the standard framework of fuzzy random variables \citep{Puri1986,Nather2006,Ferraro2010}, in which the fuzzy value observed on each unit is a data object and not a realization of a latent crisp quantity. Assumption (A1) is what justifies the bootstrap inference of Section~\ref{subsec:inference}. Assumption (A3) ensures identifiability. It fails, for instance, when a covariate is an exact linear function of another one.

\subsection{Estimation}
\label{subsec:estimation}

Let $\mathbf{A}=[\mathbf{a}_m,\mathbf{a}_l,\mathbf{a}_r]$ be the $p\times 3$ coefficient matrix. The estimator minimizes the sum of Yang--Ko discrepancies between the observed and fitted transformed responses,
\begin{equation}
\label{eq:criterion}
S(\mathbf{A})=\sum_{i=1}^n \D\bigl(\Phi(\tilde{Y}_i),\,\mathbf{x}_i^\top\mathbf{A}\bigr),
\end{equation}
where the expression \eqref{eq:yk} is applied to the triples formed by the center and the transformed left and right spreads. Formula \eqref{eq:yk} is a quadratic form in the differences of the three coordinates and is defined for any pair of real triples. The minimization in \eqref{eq:criterion} is over all $\mathbf{A}\in\mathbb{R}^{p\times 3}$, without constraints: a fitted transformed spread $\mathbf{x}_i^\top\mathbf{a}_l$ may be negative and is then not the image under $g$ of an admissible spread. Criterion \eqref{eq:criterion} is therefore the unconstrained quadratic extension of the Yang--Ko distance to $\mathbb{R}^3$, and it coincides with a distance between triangular fuzzy numbers only when all fitted transformed spreads are nonnegative.

\begin{proposition}
\label{prop:ls_yk}
Under (A3), for any $\lambda,\rho>0$ the criterion \eqref{eq:criterion} has the unique unconstrained minimizer
\[
\widehat{\mathbf{A}}=(\mathbf{X}^\top\mathbf{X})^{-1}\mathbf{X}^\top\mathbf{Y}_\Phi,
\]
that is, ordinary least squares applied separately to $Y_m$, $g(Y_l)$ and $h(Y_r)$.
\end{proposition}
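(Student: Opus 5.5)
The plan is to write the Yang--Ko discrepancy \eqref{eq:yk}, applied to a pair of real triples, as a quadratic form in their coordinatewise difference, and then reduce \eqref{eq:criterion} to a weighted trace criterion. For $\mathbf{u}=(u_m,u_l,u_r)$ define the difference of the observed transformed response and its fit, $\mathbf{u}_i=\Phi(\tilde{Y}_i)-\mathbf{x}_i^\top\mathbf{A}$. Expanding \eqref{eq:yk} gives
\[
\D = u_m^2+(u_m-\lambda u_l)^2+(u_m+\rho u_r)^2=\mathbf{u}^\top\mathbf{W}\mathbf{u},
\qquad
\mathbf{W}=\begin{pmatrix}3&-\lambda&\rho\\-\lambda&\lambda^2&0\\\rho&0&\rho^2\end{pmatrix}.
\]
The first step is to show that $\mathbf{W}$ is positive definite for $\lambda,\rho>0$. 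This follows because $\mathbf{u}^\top\mathbf{W}\mathbf{u}=0$ forces $u_m=0$, then $\lambda u_l=0$ and $\rho u_r=0$, hence $\mathbf{u}=0$.

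Next I write $S(\mathbf{A})=\operatorname{tr}\bigl[(\mathbf{Y}_\Phi-\mathbf{X}\mathbf{A})\mathbf{W}(\mathbf{Y}_\Phi-\mathbf{X}\mathbf{A})^\top\bigr]$ and decompose around $\widehat{\mathbf{A}}=(\mathbf{X}^\top\mathbf{X})^{-1}\mathbf{X}^\top\mathbf{Y}_\Phi$, which exists by (A3). Put $\mathbf{E}=\mathbf{Y}_\Phi-\mathbf{X}\widehat{\mathbf{A}}$ and $\mathbf{D}=\widehat{\mathbf{A}}-\mathbf{A}$, so that $\mathbf{Y}_\Phi-\mathbf{X}\mathbf{A}=\mathbf{E}+\mathbf{X}\mathbf{D}$. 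The normal equations give $\mathbf{X}^\top\mathbf{E}=0$, so the cross terms $\operatorname{tr}(\mathbf{E}\mathbf{W}\mathbf{D}^\top\mathbf{X}^\top)=\operatorname{tr}(\mathbf{X}^\top\mathbf{E}\mathbf{W}\mathbf{D}^\top)$ vanish. Hence
\[
S(\mathbf{A})=S(\widehat{\mathbf{A}})+\operatorname{tr}\bigl(\mathbf{X}\mathbf{D}\mathbf{W}\mathbf{D}^\top\mathbf{X}^\top\bigr)
=S(\widehat{\mathbf{A}})+\sum_{i}(\mathbf{x}_i^\top\mathbf{D})\mathbf{W}(\mathbf{x}_i^\top\mathbf{D})^\top .
\]
The extra term is nonnegative because $\mathbf{W}\succ0$. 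It is zero only if $\mathbf{X}\mathbf{D}=0$, and by full column rank this means $\mathbf{D}=0$. This gives both minimality and uniqueness, for every $\lambda,\rho>0$. The minimizer is the same for all shape parameters because $\mathbf{W}$ cancels from the orthogonality argument.

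The crucial point, rather than a real obstacle, is that the common design makes the column-space projection act identically on all three columns. That is why the cross-coupling in $\mathbf{W}$ drops out. I would also remark that positive definiteness of $\mathbf{W}$ is exactly where $\lambda,\rho>0$ is used: if $\lambda=0$, for instance, the $l$-coefficients would not be identified. Finally, since the problem is unconstrained over $\mathbb{R}^{p\times3}$, no admissibility of fitted spreads enters.
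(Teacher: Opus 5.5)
Your proof is correct. Its first half coincides with the paper's: you use the same $\mathbf{W}$ and write the criterion as $\operatorname{tr}(\mathbf{E}\mathbf{W}\mathbf{E}^\top)$. The difference is in how the two arguments conclude.

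The paper checks positive definiteness through the leading principal minors $3$, $2\lambda^2$, $\lambda^2\rho^2$ and then argues by strict convexity. The Hessian $2(\mathbf{W}\otimes\mathbf{X}^\top\mathbf{X})$ is positive definite, the unique stationary point solves $\mathbf{X}^\top(\mathbf{Y}_\Phi-\mathbf{X}\mathbf{A})\mathbf{W}=\mathbf{0}$, and $\mathbf{W}$ is removed using its nonsingularity.

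You instead get positive definiteness directly from the sum-of-squares form, which is nonnegative and vanishes only at $\mathbf{u}=\mathbf{0}$. You then complete the square around the least squares solution, which gives the exact identity $S(\mathbf{A})=S(\widehat{\mathbf{A}})+\operatorname{tr}(\mathbf{X}\mathbf{D}\mathbf{W}\mathbf{D}^\top\mathbf{X}^\top)$.

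What each route buys:
\begin{itemize}
\item Your route needs no calculus and says slightly more. The cross terms vanish for every weight matrix, so $\widehat{\mathbf{A}}$ remains a minimizer even when $\mathbf{W}$ is only positive semidefinite. Positive definiteness is used only for uniqueness, which makes your remark on $\lambda=0$ precise.
\item The paper's route is shorter given standard convexity facts. Its stationary equation also shows directly how the three components would couple through $\mathbf{W}$ if the equations had different regressors, the point the paper makes right after the proof.
\end{itemize}
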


\begin{proof}
Let $\mathbf{e}_i=\Phi(\tilde{Y}_i)^\top-\mathbf{A}^\top\mathbf{x}_i=(e_{im},e_{il},e_{ir})^\top$. Expanding \eqref{eq:yk},
\[
\D=e_{im}^2+(e_{im}-\lambda e_{il})^2+(e_{im}+\rho e_{ir})^2=\mathbf{e}_i^\top\mathbf{W}\mathbf{e}_i,
\qquad
\mathbf{W}=\begin{pmatrix}3&-\lambda&\rho\\-\lambda&\lambda^2&0\\\rho&0&\rho^2\end{pmatrix}.
\]
The leading principal minors of $\mathbf{W}$ are $3$, $2\lambda^2$ and $\det\mathbf{W}=\lambda^2\rho^2$, all positive, so $\mathbf{W}$ is positive definite. With $\mathbf{E}=\mathbf{Y}_\Phi-\mathbf{X}\mathbf{A}$, $S(\mathbf{A})=\operatorname{tr}(\mathbf{E}\mathbf{W}\mathbf{E}^\top)$, a quadratic function of $\mathbf{A}$ with Hessian $2(\mathbf{W}\otimes\mathbf{X}^\top\mathbf{X})$, which is positive definite under (A3) of Assumption~\ref{ass:model}. The unique stationary point solves $\mathbf{X}^\top(\mathbf{Y}_\Phi-\mathbf{X}\mathbf{A})\mathbf{W}=\mathbf{0}$, and since $\mathbf{W}$ is non-singular this is equivalent to $\mathbf{X}^\top(\mathbf{Y}_\Phi-\mathbf{X}\mathbf{A})=\mathbf{0}$.
\end{proof}

Proposition~\ref{prop:ls_yk} is the identical-regressors case of seemingly unrelated regressions \citep{Zellner1962}, and the proof uses only the positive definiteness of $\mathbf{W}$. The same estimator therefore minimizes $\sum_i\mathbf{e}_i^\top\mathbf{M}\mathbf{e}_i$ for every positive definite weight matrix $\mathbf{M}$, of which the Yang--Ko distance is one case. Two consequences follow. The estimates, and all effects derived from them in Sections~\ref{sec:mediation} and~\ref{sec:moderation}, do not depend on $\lambda$ and $\rho$, which affect only the goodness-of-fit and validation measures of Section~\ref{subsec:gof}. The common design is essential: if the three components had different regressors, minimizing \eqref{eq:criterion} would require a weighted joint fit that depends on $\mathbf{M}$. The minimized criterion is defined on the transformed scale. The back-transformed projected fitted spreads $\widehat{Y}_{il}=\pi(\mathbf{x}_i^\top\widehat{\mathbf{a}}_l)$ and $\widehat{Y}_{ir}=\pi(\mathbf{x}_i^\top\widehat{\mathbf{a}}_r)$ are nonnegative by projection but are not necessarily the minimizers of the distance on the original scale. With $g=\log(1+\cdot)$ the regression is specified for the transformed spreads and the fitted values are projected onto the nonnegative half-line. The fitted fuzzy response is
\begin{equation}
\label{eq:fitted}
\widehat{\tilde{Y}}_i=\bigl(\mathbf{x}_i^\top\widehat{\mathbf{a}}_m,\; \pi(\mathbf{x}_i^\top\widehat{\mathbf{a}}_l),\; \pi(\mathbf{x}_i^\top\widehat{\mathbf{a}}_r)\bigr).
\end{equation}
Under Assumption~\ref{ass:model}, $\widehat{\mathbf{A}}$ is consistent by standard least squares arguments; see \citet{Ferraro2010} for crisp covariates and \citet{Kratschmer2006a,Kratschmer2006b} for related results.

\begin{remark}
\label{rem:scale}
If all spreads of a fuzzy covariate are multiplied by $c>0$, the fitted values in \eqref{eq:fitted} are unchanged and the corresponding coefficients are divided by $c$. This is the usual invariance of least squares fits to the rescaling of a regressor. The fit, the effects of Sections~\ref{sec:mediation} and~\ref{sec:moderation} and their precision depend on the variability of the covariate spreads across units and on the size of the residual discrepancies, not on the overall scale of the spreads. For this reason the simulation study of Section~\ref{sec:simulation} varies the residual scale.
\end{remark}

\subsection{Interpretation on the original scale}
\label{subsec:interpretation}

The interpretation of coefficients in the center equation follows the usual linear logic. For instance, the coefficient $a_{m,X_{tm}}$ of the center component $X_{tm}$ quantifies the change in the center of the response associated with a one unit increase in $X_{tm}$, conditional on the other covariate components. For the spread equations, interpretation is naturally expressed on the transformed scale. With $g=h=\log(1+\cdot)$, a coefficient $b$ represents a change in $\log(1+Y_l)$ for a unit increase of the corresponding regressor: in the absence of projection, one plus the fitted left spread is multiplied by $\exp(b)$. An analogous interpretation applies to the right spread. Negative coefficients do not produce negative fitted spreads after projection: they indicate a reduction of the imprecision of the response, with multiplicative effects smaller than one on the original scale.

\subsection{Goodness of fit and validation}
\label{subsec:gof}

Consistently with Proposition~\ref{prop:ls_yk}, goodness of fit is measured on the transformed scale by
\begin{equation}
\label{eq:r2}
R^2_{\Tsc}=1-\frac{\sum_{i=1}^n\D\bigl(\Phi(\tilde{Y}_i),\mathbf{x}_i^\top\widehat{\mathbf{A}}\bigr)}{\sum_{i=1}^n\D\bigl(\Phi(\tilde{Y}_i),\overline{\Phi}\bigr)},
\qquad
\overline{\Phi}=\frac1n\sum_{i=1}^n\Phi(\tilde{Y}_i).
\end{equation}
Since the intercept-only model is nested in every model with an intercept and the estimator minimizes the numerator, $0\le R^2_{\Tsc}\le 1$, and $R^2_{\Tsc}$ cannot decrease when regressors are added. The index is analogous to the determination coefficient of \citet{Ferraro2011}. It is not a measure of predictive ability: since it cannot decrease when regressors are added, an in-sample increase is not by itself evidence of a better model. We therefore also report the adjusted version $1-(1-R^2_{\Tsc})(n-1)/(n-p)$ and leave-one-out cross-validated distances. Since the three equations share the hat matrix $\mathbf{H}=\mathbf{X}(\mathbf{X}^\top\mathbf{X})^{-1}\mathbf{X}^\top$, the leave-one-out prediction of $\Phi(\tilde{Y}_i)$ has the closed form $\Phi(\tilde{Y}_i)-\widehat{\mathbf{e}}_i/(1-h_{ii})$, with $\widehat{\mathbf{e}}_i$ the vector of residuals and $h_{ii}$ the leverage. We define
\begin{equation}
\label{eq:cv}
\mathrm{CV}_{\Tsc}=\frac1n\sum_{i=1}^n\D\bigl(\Phi(\tilde{Y}_i),\widehat{\Phi}_{(-i)}\bigr),
\qquad
\mathrm{CV}_{\Osc}=\frac1n\sum_{i=1}^n\D\bigl(\tilde{Y}_i,\widehat{\tilde{Y}}_{(-i)}\bigr),
\end{equation}
where $\widehat{\tilde{Y}}_{(-i)}$ is the back-transformed leave-one-out prediction. $\mathrm{CV}_{\Osc}$ is the out-of-sample discrepancy between observed and predicted triangles on the original scale.

\subsection{Inference}
\label{subsec:inference}

Uncertainty about coefficients and derived effects is quantified by the nonparametric case bootstrap \citep{EfronTibshirani1993}. Units are resampled with replacement, which keeps the three components of each fuzzy observation together and preserves their dependence. We report 95\% percentile intervals. To test whether a block of regressors, such as all interaction components in Section~\ref{sec:moderation}, contributes to the model, we use the statistic
\begin{equation}
\label{eq:fstat}
F_D=\frac{(S_0-S_1)/(p_1-p_0)}{S_1/(n-p_1)},
\end{equation}
where $S_0$ and $S_1$ are the minimized criteria \eqref{eq:criterion} of the reduced and full models, with $p_0<p_1$ parameters per equation. Its null distribution is approximated by the permutation scheme of \citet{FreedmanLane1983}. The rows of the residual matrix of the reduced model are permuted, the permuted residuals are added to the reduced-model fitted values, and $F_D$ is recomputed. Permuting entire rows keeps the cross-component dependence of each unit. The test requires an assumption stronger than Assumption~\ref{ass:model}:
\begin{assumption}
\label{ass:exch}
\begin{itemize}
\item[(A4)] Under the null hypothesis, the residual vectors $(\varepsilon_{im},\varepsilon_{il},\varepsilon_{ir})$ of the reduced model are exchangeable across units.
\end{itemize}
\end{assumption}
Assumption (A4) excludes, in particular, heteroscedasticity that depends on the covariates. Even under (A4) the test is not exact in finite samples, because the permuted quantities are estimated residuals; its size in a design that satisfies (A4) is assessed in Section~\ref{subsec:sim_results}. Under heteroscedasticity its level is not guaranteed. The coefficient intervals in the tables are marginal and not adjusted for multiplicity. The effects of main interest are the indirect effects and the interaction blocks; the remaining intervals are read descriptively.

\section{Mediation effects in fuzzy regression}
\label{sec:mediation}

\subsection{General structure and componentwise effects}
\label{subsec:mediation_general}

Let $\tilde{Y}$ denote a fuzzy response, let $X$ denote an explanatory variable and let $Z$ denote a mediator. Depending on the application, $X$ and $Z$ may be fuzzy numbers, crisp numerical variables or crisp qualitative variables, and $\mathbf{C}$ denotes additional crisp covariates. In the classical formulation \citep{BaronKenny1986}, mediation is described by a system of regression relations: a total relation of $\tilde{Y}$ on $X$, a mediator relation of $Z$ on $X$, and an outcome relation of $\tilde{Y}$ on $X$ and $Z$. The outcome relation decomposes the association between $X$ and $\tilde{Y}$ into a direct component and an indirect component through $Z$.

In the fuzzy setting, each relation is a model of the form \eqref{eq:model}. When $\tilde{Z}$ is fuzzy, the mediator system is
\begin{equation}
\label{eq:med_system}
\Phi(\tilde{Z}_i)=\mathbf{w}_i^\top\boldsymbol{\Gamma}+\mathbf{v}_i,
\qquad
\Phi(\tilde{Y}_i)=\mathbf{w}_i^\top\mathbf{B}_W+\mathbf{z}_i^\top\mathbf{B}_Z+\mathbf{u}_i,
\end{equation}
where $\mathbf{w}_i$ collects the intercept, the components of $X$ and $\mathbf{C}$, $\mathbf{z}_i^\top=(Z_{im},Z_{il},Z_{ir})$ collects the components of the mediator on the original scale, and $\boldsymbol{\Gamma}$, $\mathbf{B}_W$, $\mathbf{B}_Z$ are real coefficient matrices with three columns (center, transformed left spread, transformed right spread). The indirect association between $X$ and $\tilde{Y}$ can pass through three channels of the mediator, its center and its two spreads, and each of them can affect the center and the imprecision of the response. A single product of two path coefficients cannot summarize these paths. Moreover, when the spreads are transformed, the change in the mediator induced by $X$ is not linear in $\boldsymbol{\Gamma}$. We therefore define effects as contrasts between fitted values.

\begin{definition}
\label{def:effects}
Let $\delta\neq 0$ be a change in one component of $X$, here the center $X_m$, with the other components held fixed. For unit $i$, let $\mathbf{z}_i$ be the observed mediator and $\mathbf{z}_i(\delta)$ the mediator obtained by moving $X_{im}$ to $X_{im}+\delta$ in the mediator equations while keeping the observed residual:
\begin{align*}
Z_{im}(\delta)&=Z_{im}+\delta\,\Gamma_{X_m,m},\\
Z_{il}(\delta)&=\pi\bigl(g(Z_{il})+\delta\,\Gamma_{X_m,l}\bigr),\\
Z_{ir}(\delta)&=\pi\bigl(h(Z_{ir})+\delta\,\Gamma_{X_m,r}\bigr).
\end{align*}
Let $f_k(\mathbf{w},\mathbf{z})$, $k\in\{m,l,r\}$, be the $k$-th component of the fitted response \eqref{eq:fitted} on the original scale, and let $\mathbf{w}_i(\delta)$ be $\mathbf{w}_i$ with $X_{im}$ replaced by $X_{im}+\delta$. The model-based direct effect (NDE), indirect effect (NIE) and total effect (TE) on component $k$ are
\begin{align*}
\mathrm{NDE}_k(\delta)&=\frac1n\sum_{i=1}^n\bigl[f_k(\mathbf{w}_i(\delta),\mathbf{z}_i)-f_k(\mathbf{w}_i,\mathbf{z}_i)\bigr],\\
\mathrm{NIE}_k(\delta)&=\frac1n\sum_{i=1}^n\bigl[f_k(\mathbf{w}_i(\delta),\mathbf{z}_i(\delta))-f_k(\mathbf{w}_i(\delta),\mathbf{z}_i)\bigr],\\
\mathrm{TE}_k(\delta)&=\mathrm{NDE}_k(\delta)+\mathrm{NIE}_k(\delta).
\end{align*}
\end{definition}

The contrasts are meaningful when the shifted covariate $X_{im}+\delta$ and the counterfactual mediator $\mathbf{z}_i(\delta)$ remain in the domains of the corresponding scales. For bounded scales this restricts $\delta$, and the share of units for which the shifted values leave the domain should be reported together with the effects, as is done in Section~\ref{sec:application}.

The effects are plug-in sample averages, estimated by replacing $\boldsymbol{\Gamma}$, $\mathbf{B}_W$ and $\mathbf{B}_Z$ with their estimators. The decomposition $\mathrm{TE}_k=\mathrm{NDE}_k+\mathrm{NIE}_k$ holds by construction for every component. Effects on the spreads are expressed on the original scale of the spreads, which makes direct and indirect contributions to the imprecision of the response directly comparable. The next result relates Definition~\ref{def:effects} to the product-of-coefficients rule.

\begin{proposition}
\label{prop:decomp}
Consider the center component of the response.
\begin{enumerate}
\item[(i)] $\mathrm{NDE}_m(\delta)=\delta\,B_{W;X_m,m}$ and $\mathrm{NIE}_m(\delta)=\sum_{j\in\{m,l,r\}}B_{Z;j,m}\,\overline{\Delta Z}_j(\delta)$, where $\overline{\Delta Z}_j(\delta)=n^{-1}\sum_i[Z_{ij}(\delta)-Z_{ij}]$. In particular, $\overline{\Delta Z}_m(\delta)=\delta\,\Gamma_{X_m,m}$ and, for $g=\log(1+\cdot)$, $\overline{\Delta Z}_l(\delta)=n^{-1}\sum_i[\pi\{\log(1+Z_{il})+\delta\Gamma_{X_m,l}\}-Z_{il}]$, which reduces to $n^{-1}\sum_i(1+Z_{il})\{\exp(\delta\Gamma_{X_m,l})-1\}$ when the projection is not active. The right spread is analogous.
\item[(ii)] If the mediator spreads are not transformed in the mediator equations, then $\mathrm{NIE}_m(\delta)=\delta\sum_{j}\Gamma_{X_m,j}B_{Z;j,m}$. If, in addition, the total relation $Y_{im}=\mathbf{w}_i^\top\boldsymbol{\alpha}+\epsilon_i$ uses the same $\mathbf{w}_i$ and all equations are fitted by least squares, then $\widehat{\mathrm{TE}}_m(\delta)=\delta\,\widehat{\alpha}_{X_m}$ exactly.
\end{enumerate}
\end{proposition}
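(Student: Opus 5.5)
The plan is to exploit the fact that the center component of the fitted response is linear and unprojected. By \eqref{eq:fitted} and \eqref{eq:med_system},
\[
f_m(\mathbf{w},\mathbf{z})=\mathbf{w}^\top\mathbf{B}_{W;\cdot,m}+\mathbf{z}^\top\mathbf{B}_{Z;\cdot,m},
\]
where $\mathbf{B}_{W;\cdot,m}$ and $\mathbf{B}_{Z;\cdot,m}$ are the center columns, with no transformation and no projection $\pi$. For part (i) I substitute this into Definition~\ref{def:effects}. In the NDE contrast the mediator $\mathbf{z}_i$ is the same in both terms, and $\mathbf{w}_i(\delta)-\mathbf{w}_i=\delta\mathbf{e}_{X_m}$. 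Each summand therefore equals $\delta B_{W;X_m,m}$, and so does their average. In the NIE contrast $\mathbf{w}_i(\delta)$ is common to both terms, so each summand is $\sum_j B_{Z;j,m}[Z_{ij}(\delta)-Z_{ij}]$. Exchanging the finite sums gives the stated formula with $\overline{\Delta Z}_j(\delta)$.

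The explicit forms of $\overline{\Delta Z}_j$ come from the definitions of $\mathbf{z}_i(\delta)$. For the center, $Z_{im}(\delta)-Z_{im}=\delta\Gamma_{X_m,m}$ for every $i$. For the left spread with $g=\log(1+\cdot)$, $Z_{il}(\delta)=\pi\{\log(1+Z_{il})+\delta\Gamma_{X_m,l}\}$ by definition. When the projection is inactive, $\pi(\eta)=\exp(\eta)-1$, which yields $(1+Z_{il})\exp(\delta\Gamma_{X_m,l})-1-Z_{il}=(1+Z_{il})\{\exp(\delta\Gamma_{X_m,l})-1\}$. The right spread is identical with $h$.

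For the first claim of (ii), untransformed mediator spreads mean that the mediator equations are linear in the original scale with no projection. Then $Z_{ij}(\delta)-Z_{ij}=\delta\Gamma_{X_m,j}$ for all $j$, and (i) gives $\delta\sum_j\Gamma_{X_m,j}B_{Z;j,m}$. For the exactness of the total effect I adapt the classical OLS omitted-variable identity. Let $\mathbf{W}$ and $\mathbf{Z}$ be the stacked $\mathbf{w}_i^\top$ and $\mathbf{z}_i^\top$, and write $\mathbf{Z}=\mathbf{W}\widehat{\boldsymbol\Gamma}+\widehat{\mathbf V}$ with $\mathbf{W}^\top\widehat{\mathbf V}=\mathbf{0}$, which are the normal equations from Proposition~\ref{prop:ls_yk} with common design $\mathbf{W}$. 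Similarly write $Y_m=\mathbf{W}\widehat{\mathbf B}_{W;\cdot,m}+\mathbf{Z}\widehat{\mathbf B}_{Z;\cdot,m}+\widehat{\mathbf u}$, with $\widehat{\mathbf u}$ orthogonal to $\mathbf{W}$. Premultiplying by $(\mathbf{W}^\top\mathbf{W})^{-1}\mathbf{W}^\top$ gives
\[
\widehat{\boldsymbol\alpha}=\widehat{\mathbf B}_{W;\cdot,m}+\widehat{\boldsymbol\Gamma}\widehat{\mathbf B}_{Z;\cdot,m}.
\]
Reading off the $X_m$ row gives $\widehat\alpha_{X_m}=\widehat B_{W;X_m,m}+\sum_j\widehat\Gamma_{X_m,j}\widehat B_{Z;j,m}$. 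Multiplying by $\delta$ and adding the plug-in NDE and NIE yields $\widehat{\mathrm{TE}}_m(\delta)=\delta\widehat\alpha_{X_m}$.

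The main point requiring care is this last step. It needs the mediator regressed on the same $\mathbf{w}_i$ as in the outcome and total equations; the common-design hypothesis supplies this, as does full rank of $\mathbf{W}$ under (A3). It also needs all three mediator components to be on the original scale, so that the fitted identity $\mathbf{Z}=\mathbf{W}\widehat{\boldsymbol\Gamma}+\widehat{\mathbf V}$ holds with the very same $\widehat{\boldsymbol\Gamma}$ used in the counterfactual shift. Finally, the counterfactual must keep the observed residual: by linearity $Z_{ij}(\delta)-Z_{ij}$ is then exactly $\delta\widehat\Gamma_{X_m,j}$, without any residual-dependent term.
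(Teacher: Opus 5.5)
Your proposal is correct and follows the paper's proof. Part (i) comes from the linearity, with no projection, of the center component $f_m$. The exactness of $\widehat{\mathrm{TE}}_m$ comes from premultiplying the fitted outcome equation by $(\mathbf{W}^\top\mathbf{W})^{-1}\mathbf{W}^\top$, which gives $\widehat{\boldsymbol\alpha}=\widehat{\mathbf B}_{W;\cdot,m}+\widehat{\boldsymbol\Gamma}\widehat{\mathbf B}_{Z;\cdot,m}$, and then reading off the $X_m$ row. You also make explicit an assumption the paper leaves implicit: no nonnegativity projection acts on the untransformed mediator spreads, which is what guarantees $Z_{ij}(\delta)-Z_{ij}=\delta\Gamma_{X_m,j}$.
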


\begin{proof}
Part (i) follows from the linearity of $f_m$ in $\mathbf{w}$ and $\mathbf{z}$. For (ii), the first claim follows from (i) with $Z_{ij}(\delta)-Z_{ij}=\delta\Gamma_{X_m,j}$. For the second claim, let $\mathbf{W}_0$ and $\mathbf{Z}_0$ be the matrices with rows $\mathbf{w}_i^\top$ and $\mathbf{z}_i^\top$. Then $Y_m=\mathbf{W}_0\widehat{\mathbf{b}}_W+\mathbf{Z}_0\widehat{\mathbf{b}}_Z+\widehat{\mathbf{u}}$ with $\widehat{\mathbf{u}}$ orthogonal to the columns of $\mathbf{W}_0$. Hence $\widehat{\boldsymbol{\alpha}}=(\mathbf{W}_0^\top\mathbf{W}_0)^{-1}\mathbf{W}_0^\top Y_m=\widehat{\mathbf{b}}_W+\widehat{\boldsymbol{\Gamma}}\widehat{\mathbf{b}}_Z$, where $\widehat{\boldsymbol{\Gamma}}=(\mathbf{W}_0^\top\mathbf{W}_0)^{-1}\mathbf{W}_0^\top\mathbf{Z}_0$, and the row of $X_m$ gives the result.
\end{proof}

Proposition~\ref{prop:decomp} gives a sufficient condition for the classical identity ``total $=$ direct $+$ product of paths'' on the center: the mediator spreads are modeled on their original scale. The identity also holds, trivially, when the paths through the mediator spreads are inactive, that is, when $B_{Z;l,m}=B_{Z;r,m}=0$ or $\Gamma_{X_m,l}=\Gamma_{X_m,r}=0$; then $\mathrm{NIE}_m(\delta)=\delta\,\Gamma_{X_m,m}B_{Z;m,m}$. When these paths are active and the mediator spreads are transformed, the indirect effect through the spreads depends on the observed spreads, and the plug-in total effect differs from the coefficient of the total relation. For the spread components of the response, $f_l$ and $f_r$ are nonlinear, and Definition~\ref{def:effects} is the natural generalization.

Mediation is called total for component $k$ when $\mathrm{NIE}_k\neq0$ and $\mathrm{NDE}_k=0$, and partial when both are different from zero. In finite samples these statements are assessed with bootstrap intervals. Failing to exclude zero for $\mathrm{NDE}_k$ is not evidence of total mediation. The acronyms NDE and NIE follow the natural direct and indirect effects of causal mediation analysis \citep{VanderWeele2015}, but Definition~\ref{def:effects} is a descriptive decomposition of associations. A causal reading of NDE and NIE would require sequential ignorability, namely no unmeasured confounding of the $X$--$Z$, $X$--$Y$ and $Z$--$Y$ relations and no mediator--outcome confounder affected by $X$ \citep{ImaiEtAl2010,VanderWeele2015}. These conditions are not assumed in this paper.

\subsection{Fuzzy predictor and fuzzy mediator}

When both the predictor $X$ and the mediator $Z$ are represented as fuzzy numbers, mediation is modeled through \eqref{eq:med_system} with $\mathbf{w}_i^\top=(1,X_{im},X_{il},X_{ir},\mathbf{C}_i^\top)$. In compact notation, the structure can be written as
\begin{equation}
\label{eq:fuzzy_fuzzy_mediation_system}
\tilde{Z}_i = \gamma_0 + \gamma_1 \tilde{X}_i + \tilde{u}_{Zi},
\qquad
\tilde{Y}_i = \beta_0 + \beta_1 \tilde{X}_i + \beta_2 \tilde{Z}_i + \tilde{u}_{Yi},
\end{equation}
where each product of a coefficient and a fuzzy variable is shorthand for a linear combination of its three components, with coefficients specific to each component equation of the left-hand side. The first relation describes the association between the fuzzy predictor and the fuzzy mediator, while the second decomposes the association between $\tilde{X}$ and $\tilde{Y}$ into a direct component and an indirect component associated with $\tilde{Z}$. The conceptual structure is illustrated in Figure~\ref{fig:fuzzy_mediation_total_partial}. The upper panel represents total mediation, with no direct path, and the lower panel partial mediation. Effects are computed componentwise with Definition~\ref{def:effects}.

\begin{figure}[htbp]
\centering
\includegraphics[width=0.62\textwidth]{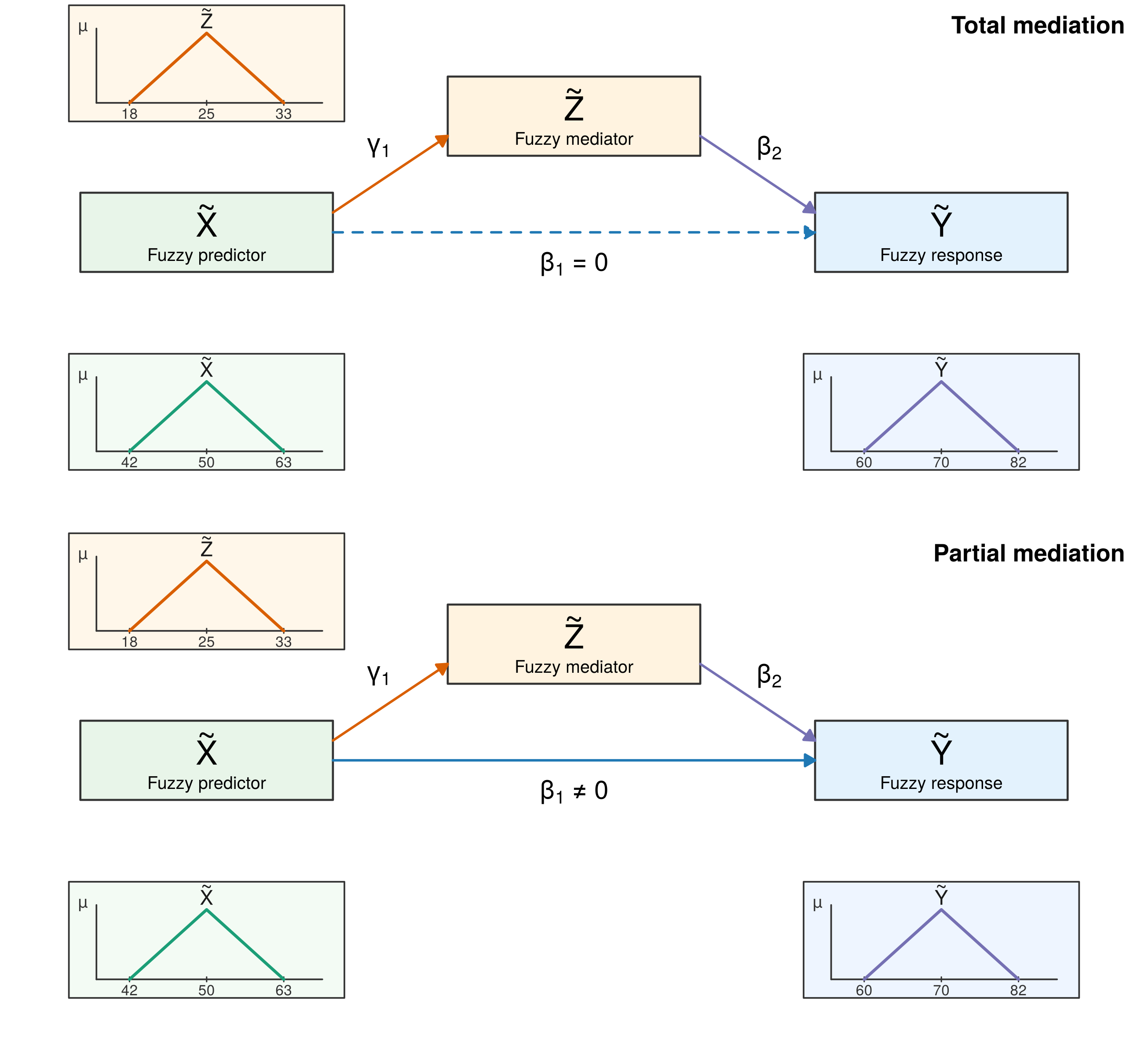}
\caption{Total (upper panel) and partial (lower panel) mediation with fuzzy predictor, fuzzy mediator and fuzzy response, all represented by triangular fuzzy numbers. Labels $\gamma_1$, $\beta_1$, $\beta_2$ stand for the blocks of componentwise coefficients in \eqref{eq:fuzzy_fuzzy_mediation_system}; $\beta_1=0$ means that all coefficients of the components of $\tilde{X}$ in the outcome equations are zero, so that $\mathrm{NDE}_k=0$ for every $k$.}
\label{fig:fuzzy_mediation_total_partial}
\end{figure}

\subsection{Crisp predictor and fuzzy mediator}
\label{subsec:mediation_crispX_fuzzyZ}

In several applications the explanatory variable is observed without imprecision and is therefore treated as crisp, while the mediator represents an intermediate construct that is more appropriately elicited in linguistic or approximate form and hence modeled as a fuzzy number. The crisp predictor may be quantitative, for instance age or a test score, or qualitative, for instance group membership, in which case it is introduced through dummy variables $G_{ih}$, $h=1,\ldots,H-1$. System \eqref{eq:med_system} applies with $\mathbf{w}_i$ containing $X_i$ or the dummies. For a quantitative predictor, Definition~\ref{def:effects} is applied with $\delta$ equal to a change in $X_i$. For a qualitative predictor, the contrast is between category $h$ and the reference category, that is, $\mathbf{w}_i(\delta)$ sets $G_{ih}=1$ and $\mathbf{w}_i$ sets all dummies to zero, which gives category-specific direct and indirect effects. Figure~\ref{fig:fuzzy_mediation_crispX_total_partial} in \ref{app:figures} illustrates total and partial mediation in the quantitative case.

\subsection{Fuzzy predictor and crisp mediator}
\label{subsec:mediation_fuzzyX_crispZ}

When the predictor is modeled as a fuzzy number while the mediator is observed without imprecision, the mediator may be quantitative or qualitative. For a quantitative mediator $Z_i$, the mediator equation is the ordinary regression $Z_i=\mathbf{w}_i^\top\boldsymbol{\gamma}+v_i$, with $\mathbf{w}_i$ containing the three components of $\tilde{X}_i$. The outcome equations contain $Z_i$ as a crisp regressor, and Definition~\ref{def:effects} applies with $Z_i(\delta)=Z_i+\delta\gamma_{X_m}$.

For a qualitative mediator with $K$ categories, coded by dummies $D_{ik}$, $k=1,\ldots,K-1$, the outcome equations are
\begin{equation}
\label{eq:med_fuzzyX_crispZ_qual}
\Phi(\tilde{Y}_i)=\mathbf{w}_i^\top\mathbf{B}_W+\sum_{k=1}^{K-1}D_{ik}\,\boldsymbol{\beta}_{2k}^\top+\mathbf{u}_i.
\end{equation}
A mediator model is still needed, for instance a multinomial logit model for $p_c(\mathbf{w}_i)=P(Z_i=c\mid\mathbf{w}_i)$, $c=1,\ldots,K$. Comparing the direct association of $\tilde{X}$ with and without the dummies (the difference method) does not identify an indirect effect. With fitted probabilities $\widehat{p}_c$, the model-based indirect effect on component $k\in\{m,l,r\}$ is
\[
\mathrm{NIE}_{k}(\delta)=\frac1n\sum_{i=1}^n\sum_{c=1}^{K}\bigl[\widehat{p}_c(\mathbf{w}_i(\delta))-\widehat{p}_c(\mathbf{w}_i)\bigr]f_{k}\bigl(\mathbf{w}_i(\delta),c\bigr),
\]
which averages the fitted response over the shift in the category probabilities induced by the predictor. With a quantitative mediator, mediation reflects a continuous transmission mechanism. With a qualitative mediator, it reflects a change in the composition of groups with different fuzzy response levels. The structure is illustrated in Figure~\ref{fig:fuzzy_mediation_fuzzyX_qualZ_total_partial} in \ref{app:figures}.

\section{Moderation effects in fuzzy regression}
\label{sec:moderation}

In many empirical settings, the effect of an explanatory variable on a response is not homogeneous across the population, but varies systematically as a function of a third variable, the moderator. In classical regression, moderation is modeled through interaction terms \citep{BaronKenny1986,Hayes2018}. Let $\tilde{Y}_i$ be a fuzzy response, $X_i$ an explanatory variable and $M_i$ a moderator. The general fuzzy regression model with moderation is
\begin{equation}
\label{eq:fuzzy_moderation_general}
\tilde{Y}_i=\beta_0+\beta_1 X_i+\beta_2 M_i+\beta_3\bigl(X_i\circ M_i\bigr)+\tilde{u}_i,
\end{equation}
where $\circ$ is an interaction operator whose definition depends on the nature of the variables involved, and, as in \eqref{eq:fuzzy_fuzzy_mediation_system}, products of coefficients and fuzzy variables are shorthand for componentwise linear combinations within model \eqref{eq:model}. Moderation is interpreted as a structural change in the relationship between $X$ and the fuzzy response rather than as a transmission mechanism.

The interaction regressor must be a TFN, so that its center and spreads can enter the design \eqref{eq:design_row}. We use the following operators.
\begin{itemize}
\item \emph{Fuzzy predictor, crisp numerical moderator $M\ge0$}: $M\odot\tilde{X}=(MX_m,MX_l,MX_r)$, exact by \eqref{eq:sum_scalar}. If $M$ takes negative values, for instance after centering, the spreads are swapped as in \eqref{eq:sum_scalar}.
\item \emph{Fuzzy predictor, crisp categorical moderator}: $D_k\odot\tilde{X}$ for each dummy $D_k$, again exact.
\item \emph{Fuzzy predictor, fuzzy moderator}: the Dubois--Prade approximation \eqref{eq:dp_product}, $\tilde{X}\otimes\tilde{M}=(X_mM_m,\,X_mM_l+M_mX_l,\,X_mM_r+M_mX_r)$, for positive TFNs.
\end{itemize}
In all cases, $\beta_3(X\circ M)$ contributes three regressors (the components of the interaction TFN) to each of the three equations of \eqref{eq:model}. Moderation is absent when all of these coefficients are zero, which is tested with the Freedman--Lane test of Section~\ref{subsec:inference}.

\paragraph{Conditional effects} For a crisp moderator, the change in the linear predictor of component $k$ associated with an increase $\delta$ in $X_m$ is $\delta(a_{k,X_m}+a_{k,(X\circ M)_m}M)$. For the center this is the usual simple slope. For spread coordinates it is a difference on the log-one-plus scale; without active projection, the original-scale contrast is $(1+f_k(\cdot))[\exp\{\delta(a_{k,X_m}+a_{k,(X\circ M)_m}M)\}-1]$. For a fuzzy moderator, the center of $X$ enters all three components of $\tilde{X}\otimes\tilde{M}$, and the change becomes
\begin{equation}
\label{eq:cond_fuzzy}
\delta\bigl(a_{k,X_m}+a_{k,(X\circ M)_m}M_m+a_{k,(X\circ M)_l}M_l+a_{k,(X\circ M)_r}M_r\bigr),
\end{equation}
so that the effect of the predictor depends on both the center and the imprecision of the moderator. In the application, conditional effects for the CR and CI blocks are reported on the original scale at the quartiles of the moderator, with the other covariates at their medians and case-bootstrap intervals.

\subsection{Moderation between fuzzy predictors}
\label{subsec:moderation_fuzzy_fuzzy}

When both the explanatory variable and the moderator are represented by fuzzy numbers, moderation arises from the interaction between two imprecise quantities. From a strictly mathematical standpoint, the exact product of two TFNs, defined through the extension principle or equivalently via $\alpha$-cuts, does not preserve the triangular shape (Section~\ref{subsec:operations}). This does not pose a problem in the distance based framework adopted here. The interaction term is not intended to represent the exact fuzzy product, but a fuzzy regressor constructed deterministically from $\tilde{X}_i$ and $\tilde{M}_i$ so as to remain in the space of TFNs. The approximation \eqref{eq:dp_product} preserves the core exactly, and its error on the support, $X_lM_l$ on the left and $X_rM_r$ on the right, is small when the spreads are small relative to the centers. Such a construction ensures that all regressors, including the interaction term, belong to the same space and can therefore be handled coherently within the proposed estimation procedure. Figure~\ref{fig:fuzzy_moderation_fuzzy_fuzzy} illustrates the absence and the presence of moderation.

\begin{figure}[htbp]
\centering
\includegraphics[width=0.6\textwidth]{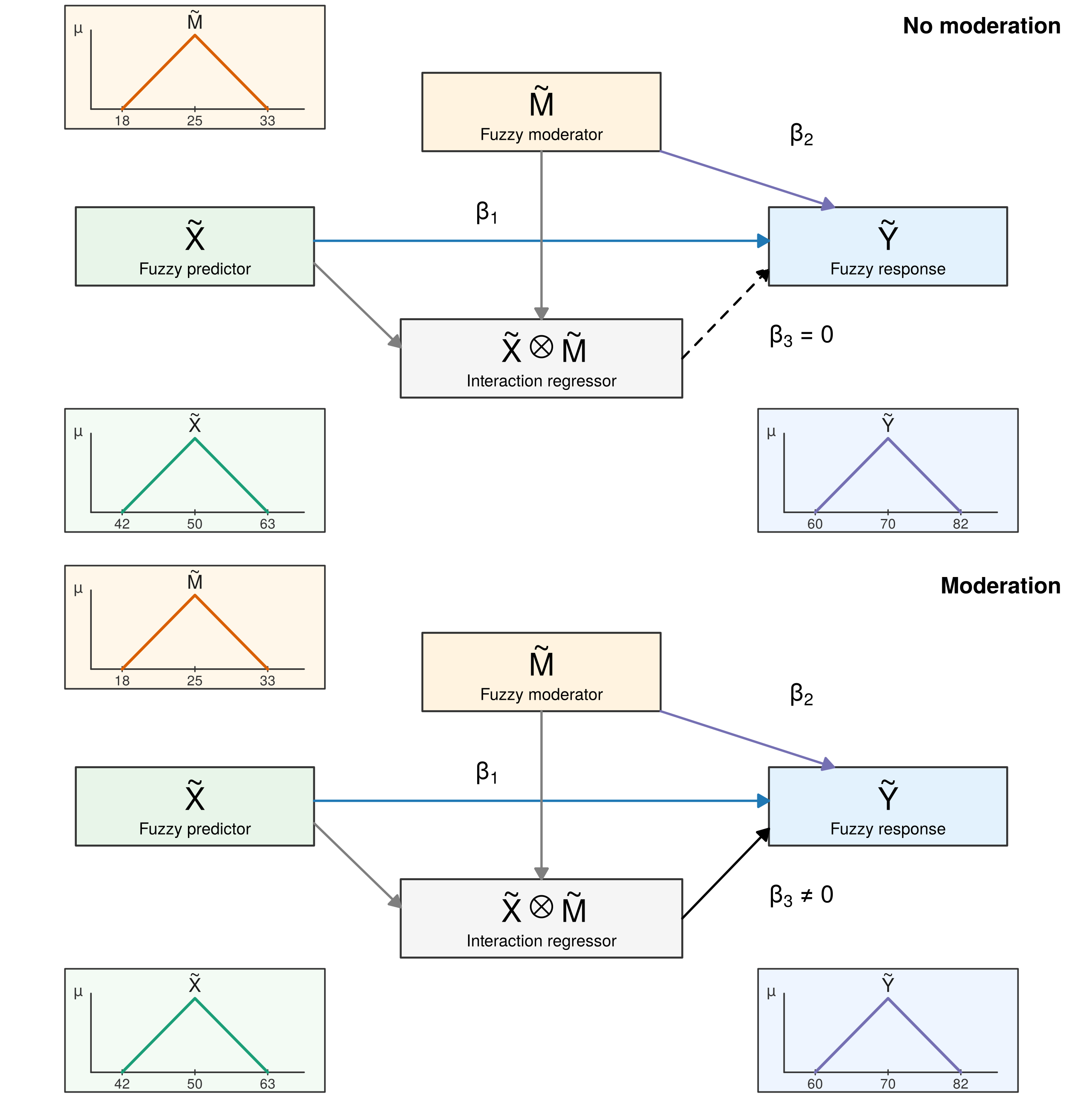}
\caption{Moderation with fuzzy predictor and fuzzy moderator. The interaction term is the additional fuzzy regressor $\tilde{X}\otimes\tilde{M}$ of \eqref{eq:dp_product}. Upper panel, absence of moderation, $\beta_{3}=0$. Lower panel, presence of moderation, $\beta_{3}\neq 0$.}
\label{fig:fuzzy_moderation_fuzzy_fuzzy}
\end{figure}

\subsection{Moderation between a fuzzy predictor and a crisp numerical moderator}
\label{subsec:moderation_fuzzy_crisp_num}

In many empirical applications, the effect of an imprecisely observed construct depends on the level of a precisely measured contextual variable. Let $\tilde{X}_i$ denote a fuzzy predictor and $M_i$ a crisp numerical moderator. The moderated model is
\begin{equation}
\label{eq:fuzzy_moderation_fuzzy_crisp_num}
\tilde{Y}_i=\beta_0+\beta_1\tilde{X}_i+\beta_2M_i+\beta_3\bigl(M_i\odot\tilde{X}_i\bigr)+\tilde{u}_i
=\beta_0+\beta_2M_i+\bigl(\beta_1+\beta_3M_i\bigr)\tilde{X}_i+\tilde{u}_i,
\end{equation}
so that the contribution of the fuzzy predictor varies continuously with $M_i$, in direct analogy with classical interaction models. Although the interaction term is algebraically symmetric in the two variables, the substantive interpretation depends on which variable is assigned the role of moderator. To preserve conceptual clarity, attention is restricted to the formulation in which the fuzzy variable plays the role of predictor and the crisp variable acts as moderator. Figure~\ref{fig:fuzzy_moderation_fuzzy_crisp_num} in \ref{app:figures} illustrates the two situations.

\subsection{Moderation between a fuzzy predictor and a crisp categorical moderator}
\label{subsec:moderation_fuzzy_crisp_qual}

Let $M_i\in\{1,\ldots,K\}$ be a crisp categorical moderator encoded through dummies $D_{ik}=\mathbb{I}(M_i=k)$, $k=1,\ldots,K-1$. The moderated model is
\begin{equation}
\label{eq:fuzzy_moderation_fuzzy_crisp_qual}
\tilde{Y}_i=\beta_0+\beta_1\tilde{X}_i+\sum_{k=1}^{K-1}\beta_{2k}D_{ik}+\sum_{k=1}^{K-1}\beta_{3k}\bigl(D_{ik}\odot\tilde{X}_i\bigr)+\tilde{u}_i.
\end{equation}
The coefficients $\beta_{2k}$ represent group specific shifts of the fuzzy response with respect to the reference category, and $\beta_{3k}$ represent deviations of the effect of the fuzzy predictor in category $k$, so that the reference category has slope $\beta_1$ and category $k$ has slope $\beta_1+\beta_{3k}$, componentwise. For a binary moderator the model reduces to two group specific fuzzy regressions. Figure~\ref{fig:fuzzy_moderation_qualitativeM} in \ref{app:figures} illustrates the case.

\section{Simulation study}\label{sec:simulation}

\subsection{Objectives and design}\label{subsec:sim_design}
The simulation addresses three questions that the application cannot answer, because there the true effects are unknown:
\begin{itemize}
\item[(Q1)] whether the percentile case-bootstrap intervals for the plug-in effects of Definition~\ref{def:effects} attain their nominal coverage, and their tests their nominal size, at realistic sample sizes;
\item[(Q2)] how much is lost by estimating the indirect effect on the center with crisp analyses that ignore some or all of the spreads, and which omitted components are responsible;
\item[(Q3)] whether the Freedman--Lane test of Section~\ref{subsec:inference} controls its size and has power against interactions of realistic magnitude.
\end{itemize}
To make the scenarios realistic, the data-generating processes are calibrated on the data of Section~\ref{sec:application}, whose variables are academic performance (AP), prior knowledge (PK), the course indicator (CI), motivation (MOT), course attendance (CR) and sex (SX), described in Section~\ref{subsec:data}. Two experiments were conducted, one for mediation (Q1, Q2) and one for moderation (Q3). All settings are those of the R code available from the corresponding author, which stores the random seed and the numbers of replications with the results.

\paragraph{Mediation experiment} The data-generating process was calibrated on the data of Section~\ref{sec:application}. For each simulated unit, a row of the data was drawn with replacement and its PK triple $(\mathrm{PK}_m,\mathrm{PK}_l,\mathrm{PK}_r)$ and its CR value were retained jointly, so that the dependence between the center and the spreads of PK and CR is preserved and the support of PK remains in $[0,100]$. The mediator CI was generated from the mediator equations, $\Phi(\widetilde{\mathrm{CI}}_i)=\mathbf{w}_i^\top\boldsymbol{\Gamma}+k\,\mathbf{v}_i$, and the response AP from the outcome equations, $\Phi(\widetilde{\mathrm{AP}}_i)=\mathbf{w}_i^\top\mathbf{B}_W+\mathbf{z}_i^\top\mathbf{B}_Z+k\,\mathbf{u}_i$, where the coefficient matrices are the least squares estimates on the data, $\mathbf{v}_i$ and $\mathbf{u}_i$ are trivariate Gaussian with the empirical residual covariance matrices of the two systems, and spreads are obtained by the projection $\pi$. The noise multiplier was $k\in\{1,2\}$ and the sample size $n\in\{50,100,250,500\}$. In the \emph{empirical} scenario all coefficients were kept; in the \emph{null} scenario the coefficients of the three PK components in the mediator equations were set to zero, so that all indirect effects are zero; in the \emph{no-spread-path} scenario the coefficients of the spreads of CI in the outcome equations were set to zero, so that the whole indirect effect passes through the center of CI. The last scenario is a control for (Q2): an estimator that ignores the spreads of the mediator is consistent there, and any remaining bias must come from other omissions. The true effects of Definition~\ref{def:effects}, for $\delta=10$, were approximated by evaluating the plug-in contrasts with the true parameters on one simulated sample of $2\times10^5$ units. Each design cell used $R=200$ replications and $B=199$ bootstrap resamples per replication. Because the Gaussian residuals are unbounded, generated centers and supports can fall outside the admissible ranges $[0,10]$ for CI and $[0,30]$ for AP. The proportion of such units was recorded, and the generated values were not truncated.

\paragraph{Moderation experiment} This is a plasmode design. In each replication, $n$ complete covariate vectors (PK, CI, CR, SX, MOT) were resampled with replacement from the data, so that the moderators are never generated. For each interaction block, a synthetic response AP$^*$ was constructed as the fitted value of the multivariate model of Section~\ref{subsec:multivariate}, plus a resampled row of its residual matrix, plus the planted interaction of that block only:
\begin{gather*}
\Phi(\widetilde{\mathrm{AP}}^{*}_i)=\mathbf{x}_i^\top\widehat{\mathbf{A}}+\bigl(\mathrm{PK}_{im}-\overline{\mathrm{PK}}_m\bigr)\,Q_i\,\boldsymbol{\theta}_Q^\top+\widehat{\mathbf{e}}_{j(i)}^\top,\\
Q_i\in\bigl\{\mathrm{SX}_i,\;\mathrm{MOT}_{im}-\overline{\mathrm{MOT}}_m\bigr\},
\end{gather*}
with $\boldsymbol{\theta}_{\mathrm{SX}}=(0.10,\,0,\,-0.010)$ and $\boldsymbol{\theta}_{\mathrm{MOT}}=(0.03,\,0,\,0)$ for the center, transformed left spread and transformed right spread. Generating one response per block makes each fitted single-block model correctly specified. The planted center values are of the same order as the empirical estimates of Section~\ref{subsec:mod_app}, so the experiment evaluates the test at effect sizes comparable to those observed. In the null scenario both vectors are zero. The fitted models are those of Section~\ref{subsec:mod_app}; centering does not change the interaction coefficients, which are therefore directly comparable with $\boldsymbol{\theta}$. Sample sizes were as above, with $R=200$ replications and $P=199$ permutations per test.

\subsection{Competing methods}\label{subsec:sim_competitors}
For the indirect effect on the center, the proposed estimator was compared with two crisp product-of-coefficients estimators of the form $\widehat{a}\,\widehat{b}\,\delta$, where $\widehat{a}$ is the coefficient of $\mathrm{PK}_m$ in a regression of $\mathrm{CI}_m$ and $\widehat{b}$ the coefficient of $\mathrm{CI}_m$ in a regression of $\mathrm{AP}_m$. In the centers-only estimator (CO) both regressions contain the centers of PK and CI and CR. In the intermediate estimator (IM) both regressions contain all three components of PK and CR, while the mediator still enters only through its center. The difference between CO and IM isolates the omission of the spreads of PK; the difference between IM and the fuzzy estimator isolates the omission of the paths through the spreads of CI. For prediction, the full fuzzy model was compared with a centers-only model in which the center of AP is predicted by least squares on the centers and the spreads are predicted by their training means. Both predictors were evaluated on an independent test sample of 1000 units generated from the same scenario.

\subsection{Evaluation criteria}\label{subsec:sim_criteria}
For $\mathrm{NDE}_m$, $\mathrm{NIE}_m$, $\mathrm{NIE}_l$ and $\mathrm{NIE}_r$ we report bias, root mean squared error (RMSE), coverage of the 95\% percentile bootstrap interval and the rate of intervals excluding zero, which estimates power in the empirical scenario and size in the null scenario. Prediction is assessed by the mean original-scale Yang--Ko distance on the test samples. For moderation we report bias and RMSE of the coefficients of the interaction centers $(\mathrm{PK}\circ\mathrm{SX})_m$ and $(\mathrm{PK}\circ\mathrm{MOT})_m$ and the rejection rate of the Freedman--Lane test at level 0.05. Monte Carlo standard errors (MCSE) were computed for all bias, coverage and rejection estimates; with $R=200$, the MCSE of a coverage estimate near 0.95 is about 0.015 and that of a rejection rate near 0.05 is about 0.015.

\subsection{Simulation results}\label{subsec:sim_results}

\begin{table}[htbp]
\centering
\caption{Monte Carlo study, mediation scenario calibrated on the empirical estimates ($R=200$ replications, $B=199$ bootstrap resamples; largest Monte Carlo standard error of a coverage estimate 0.021). Cov.: coverage of the 95\% percentile bootstrap interval; Rej.: rate of intervals excluding zero (power). Effects refer to an increase of 10 points in the center of PK. True values: NDE$_m$ 0.736 ($k=1$), 0.736 ($k=2$); NIE$_m$ 0.809 ($k=1$), 0.815 ($k=2$); NIE$_l$ 0.090 ($k=1$), 0.091 ($k=2$); NIE$_r$ 0.031 ($k=1$), 0.032 ($k=2$).}
\label{tab:mc_mediation}
\small
\resizebox{\textwidth}{!}{%
\begin{tabular}{rr|cc|cccc|c|cc}
\toprule
 & & \multicolumn{2}{c|}{NDE$_m$} & \multicolumn{4}{c|}{NIE$_m$} & NIE$_l$ & \multicolumn{2}{c}{NIE$_r$} \\
$n$ & $k$ & Bias & Cov. & Bias & RMSE & Cov. & Rej. & Cov. & Cov. & Rej. \\
\midrule
50 & 1 & -0.006 & 0.915 & 0.012 & 0.264 & 0.915 & 0.870 & 0.915 & 0.955 & 0.095 \\
100 & 1 & -0.007 & 0.935 & -0.002 & 0.176 & 0.925 & 1.000 & 0.950 & 0.950 & 0.135 \\
250 & 1 & -0.013 & 0.930 & 0.017 & 0.110 & 0.935 & 1.000 & 0.950 & 0.935 & 0.240 \\
500 & 1 & 0.001 & 0.960 & -0.005 & 0.070 & 0.940 & 1.000 & 0.960 & 0.950 & 0.395 \\
50 & 2 & -0.005 & 0.925 & -0.007 & 0.304 & 0.965 & 0.665 & 0.965 & 0.945 & 0.070 \\
100 & 2 & 0.006 & 0.905 & -0.006 & 0.226 & 0.940 & 0.975 & 0.975 & 0.945 & 0.075 \\
250 & 2 & 0.003 & 0.910 & 0.002 & 0.127 & 0.965 & 1.000 & 0.930 & 0.965 & 0.275 \\
500 & 2 & -0.009 & 0.915 & -0.005 & 0.109 & 0.920 & 1.000 & 0.905 & 0.945 & 0.395 \\
\bottomrule
\end{tabular}}
\end{table}

\paragraph{Mediation, empirical scenario} Table~\ref{tab:mc_mediation} shows that the plug-in estimators of $\mathrm{NDE}_m$ and $\mathrm{NIE}_m$ are approximately unbiased. Across the eight cells, the absolute bias of $\mathrm{NIE}_m$ is at most 0.017 (true value 0.81), which is within about two MCSE, and the RMSE decreases with $n$ roughly as $n^{-1/2}$, from 0.264 at $n=50$ to 0.070 at $n=500$ with $k=1$. Doubling the residual scale increases the RMSE by a factor between 1.15 and 1.6. Coverage of the percentile intervals ranges from 0.905 to 0.975 over the 32 combinations of effect and cell. Ten values are below 0.93, mostly for $\mathrm{NDE}_m$ with $k=2$ and for $n=50$ with $k=1$; the lowest, 0.905, is about 2.2 MCSE below 0.95. This indicates a mild undercoverage of the percentile interval rather than a systematic failure, and no clear trend with $n$. With $k=1$, power for $\mathrm{NIE}_m$ is 0.87 at $n=50$ and reaches 1 from $n=100$; for $\mathrm{NIE}_l$ (true value 0.09) it increases from 0.375 at $n=50$ to 1 at $n=500$. The indirect effect on the right spread is small (true value 0.03) and is detected in at most 40\% of the replications even at $n=500$.

\paragraph{Crisp estimators of the indirect effect} Table~\ref{tab:mc_centers} answers (Q2). In the empirical scenario, the IM estimator, which uses all components of PK but only the center of CI, has a bias between $-0.130$ and $-0.166$, that is, 16 to 20\% of the true $\mathrm{NIE}_m$, which does not decrease with $n$. In the no-spread-path scenario the same estimator is approximately unbiased (absolute bias at most 0.025, within about two MCSE), whereas the CO estimator has a positive bias between 0.014 and 0.039. The bias of IM in the empirical scenario is therefore due to the omitted paths through the spreads of the mediator, while the omission of the spreads of PK produces a bias of opposite sign. In CO the two omissions partly offset each other, which explains its smaller bias in the empirical scenario, between $-0.103$ and $-0.139$ (13 to 17\%); this is a cancellation of errors, not an advantage. The fuzzy estimator is approximately unbiased in both scenarios, with interval coverage between 0.91 and 0.975 also in the no-spread-path scenario. The crisp estimators have smaller variance: in the no-spread-path scenario IM has a lower RMSE than the fuzzy estimator at every $n$, and in the empirical scenario the fuzzy estimator has the lowest RMSE from $n=250$ (0.070 against 0.124 for CO and 0.151 for IM at $n=500$ with $k=1$), while at $n=100$ its RMSE is similar to that of CO. The fuzzy decomposition is therefore needed when the mediator spreads carry part of the indirect effect; when they do not, a crisp estimator that keeps all components of the predictor is unbiased and more efficient.

\begin{table}[htbp]
\centering
\caption{Monte Carlo study: bias and RMSE of three estimators of the indirect effect on the center of AP ($R=200$). Fuzzy: plug-in estimator of Definition 1. IM: product of coefficients with all components of PK and the center of CI. CO: product of coefficients with the centers of PK and CI only.}
\label{tab:mc_centers}
\small
\resizebox{\textwidth}{!}{%
\begin{tabular}{rrc|cc|cc|cc}
\toprule
 & & & \multicolumn{2}{c|}{Fuzzy} & \multicolumn{2}{c|}{IM} & \multicolumn{2}{c}{CO} \\
$n$ & $k$ & True NIE$_m$ & Bias & RMSE & Bias & RMSE & Bias & RMSE \\
\midrule
\multicolumn{9}{l}{\emph{Empirical scenario: part of the indirect effect passes through the spreads of CI}} \\
\quad 50 & 1 & 0.809 & 0.012 & 0.264 & -0.137 & 0.236 & -0.109 & 0.216 \\
\quad 100 & 1 & 0.809 & -0.002 & 0.176 & -0.136 & 0.179 & -0.107 & 0.158 \\
\quad 250 & 1 & 0.809 & 0.017 & 0.110 & -0.130 & 0.156 & -0.103 & 0.134 \\
\quad 500 & 1 & 0.809 & -0.005 & 0.070 & -0.141 & 0.151 & -0.111 & 0.124 \\
\quad 50 & 2 & 0.815 & -0.007 & 0.304 & -0.166 & 0.315 & -0.139 & 0.284 \\
\quad 100 & 2 & 0.815 & -0.006 & 0.226 & -0.163 & 0.248 & -0.140 & 0.227 \\
\quad 250 & 2 & 0.815 & 0.002 & 0.127 & -0.146 & 0.184 & -0.122 & 0.162 \\
\quad 500 & 2 & 0.815 & -0.005 & 0.109 & -0.155 & 0.181 & -0.134 & 0.161 \\
\multicolumn{9}{l}{\emph{Scenario without spread paths: the indirect effect passes only through the center of CI}} \\
\quad 50 & 1 & 0.643 & 0.000 & 0.257 & 0.007 & 0.194 & 0.030 & 0.204 \\
\quad 100 & 1 & 0.643 & 0.003 & 0.168 & 0.009 & 0.118 & 0.039 & 0.123 \\
\quad 250 & 1 & 0.643 & 0.000 & 0.099 & 0.000 & 0.073 & 0.028 & 0.078 \\
\quad 500 & 1 & 0.643 & -0.001 & 0.077 & -0.007 & 0.053 & 0.020 & 0.057 \\
\quad 50 & 2 & 0.643 & 0.004 & 0.345 & 0.025 & 0.300 & 0.039 & 0.287 \\
\quad 100 & 2 & 0.643 & -0.024 & 0.210 & -0.005 & 0.177 & 0.020 & 0.170 \\
\quad 250 & 2 & 0.643 & -0.008 & 0.132 & -0.008 & 0.119 & 0.014 & 0.116 \\
\quad 500 & 2 & 0.643 & -0.002 & 0.089 & -0.002 & 0.073 & 0.018 & 0.077 \\
\bottomrule
\end{tabular}}
\end{table}

\begin{table}[htbp]
\centering
\caption{Monte Carlo study. Left: empirical size of the bootstrap interval for the indirect effects under the null scenario (no effect of PK on the mediator; nominal 0.05). Right: mean out-of-sample Yang--Ko distance ($n_{\text{test}}=1000$) of the full fuzzy model and of the centers-only model with spreads predicted by their training mean, empirical scenario.}
\label{tab:mc_size_pred}
\small
\begin{tabular}{rrccc|cc}
\toprule
 & & \multicolumn{3}{c|}{Size, null scenario} & \multicolumn{2}{c}{Out-of-sample distance} \\
$n$ & $k$ & NIE$_m$ & NIE$_l$ & NIE$_r$ & Fuzzy & Centers only \\
\midrule
50 & 1 & 0.055 & 0.020 & 0.010 & 10.057 & 9.339 \\
100 & 1 & 0.065 & 0.030 & 0.005 & 9.033 & 8.849 \\
250 & 1 & 0.080 & 0.070 & 0.005 & 8.600 & 8.677 \\
500 & 1 & 0.085 & 0.085 & 0.045 & 8.454 & 8.609 \\
50 & 2 & 0.055 & 0.035 & 0.005 & 39.985 & 37.221 \\
100 & 2 & 0.075 & 0.070 & 0.035 & 36.229 & 35.437 \\
250 & 2 & 0.050 & 0.045 & 0.010 & 34.429 & 34.609 \\
500 & 2 & 0.070 & 0.055 & 0.040 & 33.725 & 34.219 \\
\bottomrule
\end{tabular}
\end{table}

\paragraph{Null scenario and prediction} Under the null scenario (Table~\ref{tab:mc_size_pred}), the rejection rate of the interval for $\mathrm{NIE}_m$ ranges from 0.050 to 0.085, with MCSE between 0.015 and 0.020; the largest values (0.080 and 0.085) occur at $n=250$ and $n=500$ with $k=1$. The test for $\mathrm{NIE}_r$ is conservative, with rates between 0.005 and 0.045, and that for $\mathrm{NIE}_l$ ranges from 0.020 to 0.085, conservative at small $n$ and slightly liberal at $n=500$ with $k=1$. Interval tests of an indirect effect are conservative when both of its factors are zero or close to zero. In the data-generating process, the coefficients of the CI components in the right-spread equation of AP are small (Table~\ref{tab:multivariate}), whereas they are clearly nonzero in the center equation, for which the size is close to nominal. Coverage under the null remains between 0.90 and 0.995. For prediction, the fuzzy model has a larger mean test distance than the centers-only model at $n=50$ (10.10 against 9.34 with $k=1$) and $n=100$ (9.03 against 8.85), and a smaller one at $n=250$ (8.60 against 8.68) and $n=500$ (8.45 against 8.61), where it is better in 94.5\% of the replications. The pattern is the same with $k=2$. The gains of the fuzzy model are small, about 2\% of the test distance at $n=500$, and require a sample size well above that of the application.

\paragraph{Projection and support violations} The nonnegative projection modifies the conditional mean of a transformed spread whenever the linear predictor plus error is negative. In large samples generated from the data-generating processes, this occurred for none of the transformed spreads in the empirical scenario with $k=1$, for 0.26\% of the spreads of CI and 0.08\% of those of AP with $k=2$, and for 0.3\% of the spreads of AP$^*$ in the plasmode experiment. It was more frequent for the spreads of the mediator in the null scenario, 4.0\% with $k=1$ and 18.5\% with $k=2$, because removing the PK paths lowers the mean of the transformed spreads; even there, the size of the indirect-effect tests remained between 0.050 and 0.085. In the empirical scenario, 11.6 to 12.0\% of the generated units with $k=1$ and 23.5 to 24.3\% with $k=2$ had a center or a support outside the admissible range of CI or AP; under the null scenario the proportion was at most 1.2\%. The Gaussian data-generating process therefore produces a non-negligible share of values that cannot occur on bounded rating scales, and the results above refer to the unbounded model on which the estimators are based rather than to the bounded scales of the application.

\begin{table}[htbp]
\centering
\caption{Plasmode Monte Carlo study for moderation. Bias (RMSE) of the interaction coefficients and rejection rate of the Freedman--Lane test ($P=199$ permutations, level 0.05) for the SX and MOT interaction blocks. Planted values: $\theta_{m,\mathrm{SX}}=0.1$, $\theta_{r,\mathrm{SX}}=-0.01$, $\theta_{m,\mathrm{MOT}}=0.03$. $R$: valid replications.}
\label{tab:mc_moderation}
\small
\resizebox{\textwidth}{!}{%
\begin{tabular}{rcccccc}
\toprule
$n$ & SX, center & SX, $\log$ right & MOT, center & Rej. SX & Rej. MOT & $R$ \\
\midrule
\multicolumn{7}{l}{\emph{Planted interaction effects}} \\
\quad 50 & 0.0013 (0.0410) & -0.0002 (0.0036) & 0.0002 (0.0130) & 0.645 & 0.600 & 200 \\
\quad 100 & -0.0007 (0.0263) & -0.0001 (0.0024) & -0.0005 (0.0078) & 0.930 & 0.960 & 200 \\
\quad 250 & -0.0008 (0.0138) & 0.0001 (0.0015) & 0.0008 (0.0053) & 1.000 & 1.000 & 200 \\
\quad 500 & 0.0018 (0.0112) & 0.0001 (0.0010) & 0.0001 (0.0036) & 1.000 & 1.000 & 200 \\
\multicolumn{7}{l}{\emph{No interaction (null)}} \\
\quad 50 & -0.0013 (0.0372) & 0.0004 (0.0044) & -0.0010 (0.0131) & 0.055 & 0.045 & 200 \\
\quad 100 & 0.0010 (0.0274) & -0.0001 (0.0027) & -0.0009 (0.0069) & 0.045 & 0.050 & 200 \\
\quad 250 & -0.0002 (0.0143) & 0.0001 (0.0016) & -0.0003 (0.0044) & 0.030 & 0.045 & 200 \\
\quad 500 & -0.0004 (0.0099) & 0.0001 (0.0011) & 0.0003 (0.0031) & 0.025 & 0.050 & 200 \\
\bottomrule
\end{tabular}}
\end{table}

\paragraph{Moderation} In the plasmode experiment (Table~\ref{tab:mc_moderation}), the estimators of the planted interaction coefficients are approximately unbiased: the largest absolute bias is 0.0018 for $(\mathrm{PK}\circ\mathrm{SX})_m$ (1.8\% of the planted 0.10, about 2.3 MCSE) and 0.0008 for $(\mathrm{PK}\circ\mathrm{MOT})_m$ (2.5\% of 0.03), and the RMSE decreases from 0.041 to 0.011 and from 0.013 to 0.004 between $n=50$ and $n=500$. Under the null scenario, the rejection rate of the Freedman--Lane test is between 0.045 and 0.050 for the MOT block and between 0.025 and 0.055 for the SX block, which is slightly conservative at $n=250$ and $n=500$ (0.030 and 0.025, MCSE about 0.011). Under the planted scenario, power is 0.645 (SX) and 0.600 (MOT) at $n=50$, 0.930 and 0.960 at $n=100$, and 1 from $n=250$. Since the planted effects are of the same order as the empirical estimates, these figures indicate that interactions of that size are detectable with high probability at the sample size of the application, but they do not validate the empirical findings, which depend on the observed data and on the selection of two blocks among four.

\section{Real data application}\label{sec:application}

\subsection{Data}\label{subsec:data}
The application uses data on $n=91$ students of a university Computer Science course, collected in 2014, with no missing values. Six variables are analyzed; three of them were obtained from a questionnaire completed by the students, which is reproduced in \ref{app:questionnaire}, and three were recorded by the instructor.

The questionnaire introduced the notion of a triangular fuzzy number and asked the students to answer each subjective item by drawing a triangle on a graduated scale and writing its lower endpoint, its core and its upper endpoint; asymmetric triangles, decimal values and exact answers (a degenerate triangle) were allowed. The three questionnaire variables are: prior knowledge (PK), the self-assessed knowledge of computer science before the course, on $[0,100]$ (item B1); motivation (MOT), the self-assessed motivation toward university studies, on $[0,10]$ (item C1); and the course indicator (CI), the overall satisfaction with the usefulness and the programme of the course, on $[0,10]$ (item E1). All students chose integer values, so all centers and spreads of PK, MOT and CI are integers. The questionnaire also recorded age, the final high-school grade and the self-assessed understanding of the course content (item D1), which are not used in this paper.

The three variables recorded by the instructor are the following. Academic performance (AP) is the course grade on the Italian scale $[0,30]$, expressed by the instructor as a TFN that summarizes the judgments obtained by the student in two midterm tests and in the final examination: the core is the overall grade, and the spreads express the instructor's imprecision about it in each direction. Course attendance (CR) is a crisp variable computed by the instructor from the attendance records at the end of the course, on a scale from 1 to 8. Sex (SX) is a crisp binary variable, with SX $=1$ for male students (38) and SX $=0$ for female students (53).

All observed supports lie within the corresponding bounds. The spreads have little variation: the left spread of AP equals 2 for 73 students and the right spread for 76, and one right spread is zero for a student with center 30, at the upper bound of the scale. Table~\ref{tab:variables} summarizes the variables and Figure~\ref{fig:tfn} displays the observed triangles.

\begin{table}[htbp]
\centering
\caption{Observed variables ($n=91$). For triangular fuzzy numbers, the table reports the range, mean and standard deviation of the centers and the mean [min, max] of the left and right spreads.}
\label{tab:variables}
\small
\resizebox{\textwidth}{!}{%
\begin{tabular}{lllcccc}
\toprule
Var. & Type & Pot. range & Center range & Mean (SD) center & Left spread & Right spread \\
\midrule
AP & TFN & $[0,30]$ & [10, 30] & 19.40 (3.46) & 1.89 [1, 3] & 1.89 [0, 3] \\
PK & TFN & $[0,100]$ & [30, 100] & 66.16 (16.63) & 3.09 [1, 5] & 2.73 [0, 5] \\
CI & TFN & $[0,10]$ & [4, 9] & 6.63 (1.02) & 1.84 [1, 3] & 1.71 [1, 3] \\
MOT & TFN & $[0,10]$ & [1, 10] & 5.65 (1.96) & 0.85 [0, 1] & 0.80 [0, 1] \\
SX & Crisp binary & $\{0,1\}$ & 0 / 1 & 0.42 (0.50) & -- & -- \\
CR & Crisp & $[1,8]$ & [1, 8] & 4.58 (1.76) & -- & -- \\
\bottomrule
\end{tabular}}
\end{table}

\begin{figure}[htbp]
\centering
\includegraphics[width=0.72\textwidth]{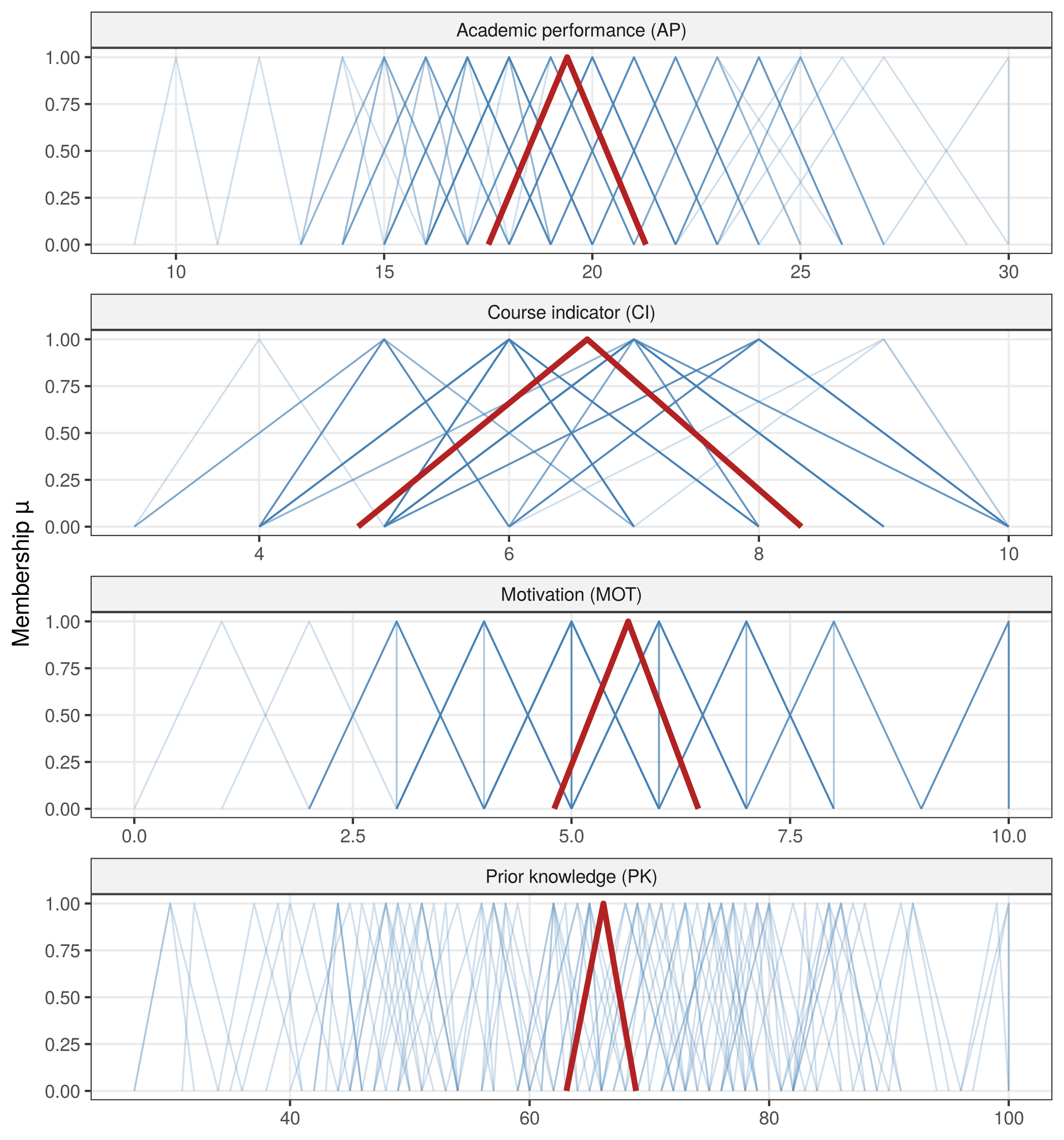}
\caption{Observed triangular fuzzy numbers for academic performance (AP), the course indicator (CI), motivation (MOT) and prior knowledge (PK) (light lines), and componentwise mean triangle (dark line).}
\label{fig:tfn}
\end{figure}

\subsection{Statistical analysis}\label{subsec:analysis}
We estimated the bivariate model of AP on PK and the multivariate model with PK, CI and CR. Coefficient intervals are 95\% percentile case-bootstrap intervals with $B=1000$. Models were compared by $R^2_{\Tsc}$ and by leave-one-out distances on both scales against an intercept-only model, a centers-only model with spreads predicted by the sample mean, and a model with fuzzy response and covariate centers only. The mediation analysis decomposes the association between PK and AP through CI, adjusting for CR, by Definition~\ref{def:effects} with $\delta=10$ points in the center of PK. A sensitivity analysis truncates the counterfactual mediator to the scale of CI and, in a second version, also excludes the students whose shifted PK support leaves $[0,100]$. Four interaction blocks were then added, one at a time, to the multivariate model: PK$\circ$CR (crisp numerical moderator), PK$\circ$CI (fuzzy), PK$\circ$SX (crisp binary, with SX as main effect) and PK$\circ$MOT (fuzzy, with MOT as main effect). Each block was tested with the Freedman--Lane test with $P=999$ permutations, so that the smallest attainable $p$-value is 0.001, and the four $p$-values were adjusted with the Holm procedure. These hypotheses were not specified when the data were collected, and the moderation analyses are exploratory. Interaction regressors were not centered; this does not change fits or tests, but inflates the condition number of the design.

\subsection{Results}\label{subsec:multivariate}

\begin{table}[htbp]
\centering
\caption{Multivariate model of AP on PK, CI and CR: componentwise estimates with 95\% percentile bootstrap intervals ($B=1000$). $R^2_{\mathcal{T}}=0.765$, adjusted $R^2_{\mathcal{T}}=0.746$. Bold: interval excluding zero.}
\label{tab:multivariate}
\small
\begin{tabular}{lccc}
\toprule
Term & Center & $\log$ left spread & $\log$ right spread \\
\midrule
Intercept & 1.852 {\scriptsize [-0.748, 4.316]} & 0.404 {\scriptsize [0.154, 0.650]} & 0.629 {\scriptsize [0.353, 0.871]} \\
$\mathrm{PK}_{m}$ & \textbf{0.074} {\scriptsize [0.035, 0.109]} & 0.001 {\scriptsize [-0.002, 0.004]} & 0.002 {\scriptsize [-0.002, 0.006]} \\
$\mathrm{PK}_{l}$ & 0.126 {\scriptsize [-0.180, 0.409]} & \textbf{0.031} {\scriptsize [0.008, 0.056]} & 0.000 {\scriptsize [-0.030, 0.028]} \\
$\mathrm{PK}_{r}$ & -0.160 {\scriptsize [-0.536, 0.200]} & -0.007 {\scriptsize [-0.036, 0.021]} & 0.019 {\scriptsize [-0.015, 0.062]} \\
$\mathrm{CI}_{m}$ & \textbf{1.652} {\scriptsize [1.189, 2.049]} & \textbf{0.073} {\scriptsize [0.033, 0.115]} & 0.030 {\scriptsize [-0.014, 0.077]} \\
$\mathrm{CI}_{l}$ & 0.345 {\scriptsize [-0.490, 1.204]} & 0.009 {\scriptsize [-0.054, 0.078]} & -0.055 {\scriptsize [-0.154, 0.029]} \\
$\mathrm{CI}_{r}$ & 0.344 {\scriptsize [-0.603, 1.146]} & -0.001 {\scriptsize [-0.065, 0.067]} & 0.055 {\scriptsize [-0.010, 0.120]} \\
$\mathrm{CR}$ & 0.120 {\scriptsize [-0.063, 0.348]} & 0.002 {\scriptsize [-0.010, 0.015]} & 0.009 {\scriptsize [-0.007, 0.028]} \\
\bottomrule
\end{tabular}
\end{table}

\begin{table}[htbp]
\centering
\caption{Model comparison by leave-one-out cross-validated Yang--Ko distance on the original scale (CV$_{\mathcal{O}}$) and on the transformed scale (CV$_{\mathcal{T}}$). Lower is better. $p$ is the number of parameters per component equation.}
\label{tab:model_comparison}
\small
\begin{tabular}{lcccc}
\toprule
Model & $p$ & $R^2_{\mathcal{T}}$ & CV$_{\mathcal{O}}$ & CV$_{\mathcal{T}}$ \\
\midrule
Intercept only (mean TFN) & 1 & 0.000 & 35.970 & 36.180 \\
Centers only, spreads at sample mean & 4 & -- & 9.272 & 9.316 \\
Fuzzy response, covariate centers only & 4 & 0.760 & 9.240 & 9.309 \\
Bivariate fuzzy model (PK) & 4 & 0.610 & 14.954 & 15.097 \\
Full fuzzy model (PK, CI, CR) & 8 & 0.765 & 9.825 & 9.894 \\
\bottomrule
\end{tabular}
\end{table}

\begin{figure}[htbp]
\centering
\includegraphics[width=0.72\textwidth]{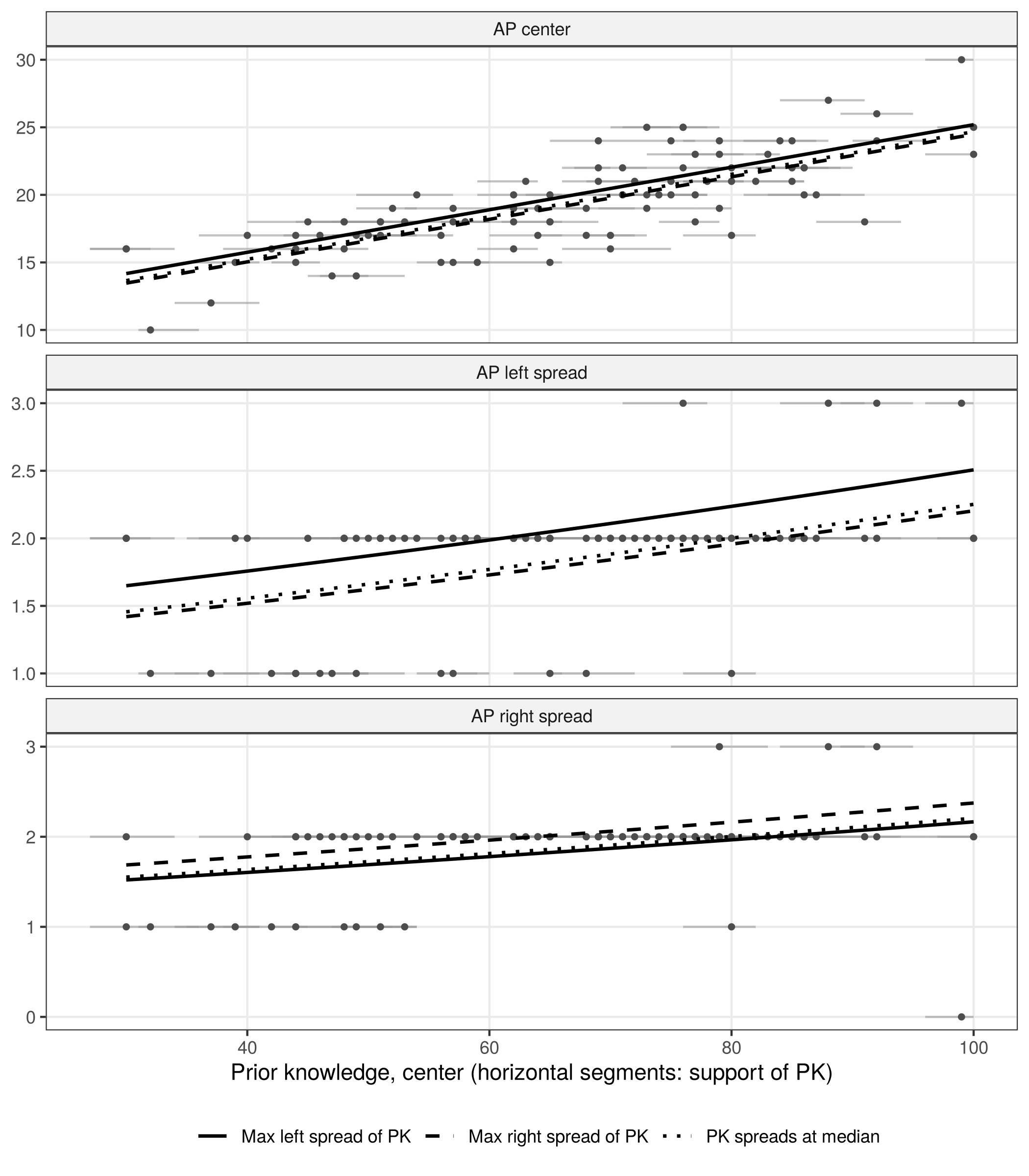}
\caption{Bivariate model of AP on PK. Fitted center, left spread and right spread of AP as functions of the center of PK, with both spreads of PK at their medians (dotted), with the left spread of PK at its maximum (solid) or with the right spread at its maximum (dashed). Points are observed values; horizontal segments are the supports of PK.}
\label{fig:bivariate}
\end{figure}

\paragraph{Regression} In the bivariate model (Figure~\ref{fig:bivariate}; coefficients in Table~\ref{tab:bivariate}, \ref{app:tables}), the center of PK is associated with the center of AP (0.157 per point, 95\% interval $[0.130, 0.187]$) and with the transformed left spread of AP ($0.0040$, $[0.0022, 0.0059]$), and the left spread of PK is associated with the transformed left spread of AP ($0.038$, $[0.010, 0.065]$); $R^2_{\Tsc}=0.610$. Students with higher prior knowledge, and students whose prior knowledge was assessed with a wider left spread, tend to have a more imprecise lower side of academic performance. In the multivariate model (Table~\ref{tab:multivariate}), the center of AP is associated with the centers of PK (0.074, $[0.035, 0.109]$) and CI (1.65, $[1.19, 2.05]$); the transformed left spread of AP is associated with the left spread of PK (0.031, $[0.008, 0.056]$) and with the center of CI (0.073, $[0.033, 0.115]$); no covariate has an interval excluding zero in the right-spread equation, and CR has none in any equation. $R^2_{\Tsc}=0.765$ and the adjusted value is 0.746. Variance inflation factors do not exceed 2.95, two units have leverage above $2p/n=0.176$ (maximum 0.239), and the residual correlation between the center and the transformed left spread is 0.28.

The leave-one-out comparison (Table~\ref{tab:model_comparison}) does not favor the covariate spreads. The full fuzzy model has $\mathrm{CV}_{\Osc}=9.83$, against 35.97 for the intercept-only model, but the model with fuzzy response and covariate centers only has $\mathrm{CV}_{\Osc}=9.24$ and the centers-only model 9.27. The six spread regressors of PK and CI increase $R^2_{\Tsc}$ from 0.760 to 0.765 and worsen the leave-one-out distance by about 6\%. In this sample, therefore, the spreads of the covariates describe associations with the imprecision of AP but do not improve prediction. This agrees with the simulation, in which the fuzzy model was outperformed by the centers-only model at $n=50$ and $n=100$.

\begin{table}[htbp]
\centering
\caption{Mediator equations: CI on PK and CR, componentwise estimates with 95\% percentile bootstrap intervals. $R^2_{\mathcal{T}}=0.480$. Bold: interval excluding zero.}
\label{tab:mediator_eq}
\small
\begin{tabular}{lccc}
\toprule
Term & Center & $\log$ left spread & $\log$ right spread \\
\midrule
Intercept & 3.435 {\scriptsize [2.700, 4.239]} & 0.187 {\scriptsize [0.008, 0.369]} & 0.230 {\scriptsize [0.018, 0.445]} \\
$\mathrm{PK}_{m}$ & \textbf{0.039} {\scriptsize [0.030, 0.049]} & \textbf{0.008} {\scriptsize [0.006, 0.011]} & \textbf{0.008} {\scriptsize [0.006, 0.010]} \\
$\mathrm{PK}_{l}$ & 0.084 {\scriptsize [-0.073, 0.237]} & \textbf{0.038} {\scriptsize [0.006, 0.073]} & -0.026 {\scriptsize [-0.054, 0.005]} \\
$\mathrm{PK}_{r}$ & -0.009 {\scriptsize [-0.167, 0.141]} & 0.025 {\scriptsize [-0.007, 0.054]} & \textbf{0.078} {\scriptsize [0.045, 0.114]} \\
$\mathrm{CR}$ & 0.083 {\scriptsize [-0.001, 0.171]} & \textbf{0.019} {\scriptsize [0.001, 0.039]} & 0.016 {\scriptsize [-0.006, 0.037]} \\
\bottomrule
\end{tabular}
\end{table}

\begin{table}[htbp]
\centering
\caption{Componentwise mediation decomposition for an increase of 10 points in the center of PK, with CI as fuzzy mediator and CR as covariate. Effects are on the original scale of each component of AP. 95\% percentile bootstrap intervals, $B=1000$.}
\label{tab:mediation}
\small
\begin{tabular}{lcc}
\toprule
Effect & Estimate & 95\% interval \\
\midrule
\multicolumn{3}{l}{\emph{Center of AP}} \\
\quad Total (TE) & 1.5444 & [1.2433, 1.8634] \\
\quad Direct (NDE) & 0.7359 & [0.3797, 1.0700] \\
\quad Indirect (NIE) & 0.8085 & [0.4639, 1.1900] \\
\qquad via CI$_m$ & 0.6435 & [0.4167, 0.8858] \\
\qquad via CI$_l$ & 0.0862 & [-0.1264, 0.3237] \\
\qquad via CI$_r$ & 0.0788 & [-0.1104, 0.2888] \\
\qquad via CI$_l$ and CI$_r$ (sum) & 0.1650 & [-0.0897, 0.4904] \\
\multicolumn{3}{l}{\emph{Left spread of AP}} \\
\quad Total (TE) & 0.1140 & [0.0556, 0.1752] \\
\quad Direct (NDE) & 0.0241 & [-0.0643, 0.1157] \\
\quad Indirect (NIE) & 0.0899 & [0.0202, 0.1551] \\
\multicolumn{3}{l}{\emph{Right spread of AP}} \\
\quad Total (TE) & 0.0900 & [-0.0211, 0.1894] \\
\quad Direct (NDE) & 0.0590 & [-0.0487, 0.1675] \\
\quad Indirect (NIE) & 0.0310 & [-0.0587, 0.1142] \\
\bottomrule
\end{tabular}
\end{table}

\begin{figure}[htbp]
\centering
\includegraphics[width=0.85\textwidth]{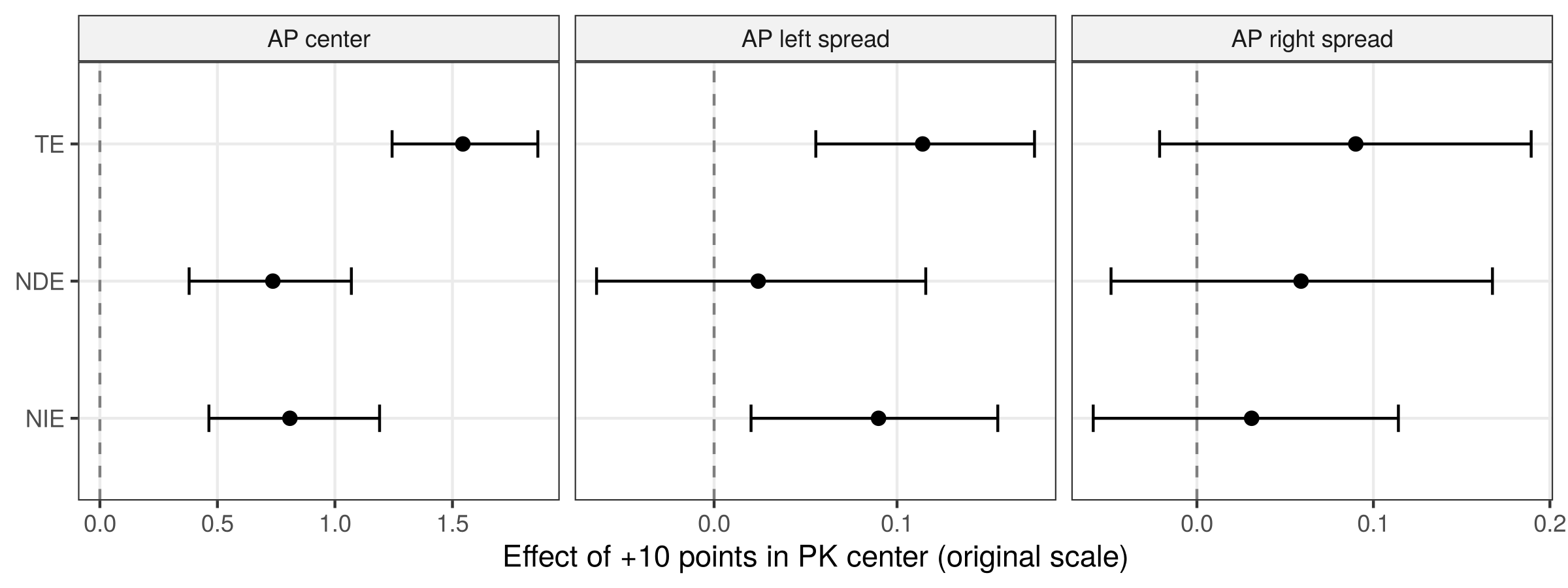}
\caption{Model-based total (TE), direct (NDE) and indirect (NIE) effects of an increase of 10 points in the center of PK on the center and spreads of AP, with CI as fuzzy mediator and CR as covariate. Bars are 95\% percentile case-bootstrap intervals ($B=1000$).}
\label{fig:mediation}
\end{figure}

\paragraph{Mediation}\label{subsec:med_app} In the mediator equations (Table~\ref{tab:mediator_eq}, $R^2_{\Tsc}=0.480$), the center of PK is associated with all three components of CI, and each spread of PK with the corresponding spread of CI (left: 0.038, $[0.006, 0.073]$; right: 0.078, $[0.045, 0.114]$), so that more imprecise assessments of prior knowledge go with more imprecise course indicators. Table~\ref{tab:mediation} and Figure~\ref{fig:mediation} report the decomposition. For the center of AP, a 10-point increase in the center of PK corresponds to $\mathrm{TE}_m=1.54$ $[1.24, 1.86]$, with $\mathrm{NDE}_m=0.74$ $[0.38, 1.07]$ and $\mathrm{NIE}_m=0.81$ $[0.46, 1.19]$, so that the indirect contrast through CI accounts for about 52\% of the model-based total. The path through the center of CI accounts for 0.64 $[0.42, 0.89]$ and equals the product of coefficients, since the center is not transformed; the paths through the left and right spreads of CI (0.09 and 0.08) have intervals that include zero, and so does their sum, 0.165 $[-0.090, 0.490]$. In this sample the indirect contrast through the spreads of CI cannot be distinguished from zero, and the indirect effect on the center is carried mainly by the center of CI. The plug-in total effect differs from the coefficient of the total model, $\delta\widehat{\alpha}_{X_m}=1.53$, by 0.012, as predicted by Proposition~\ref{prop:decomp} for transformed mediator spreads. For the left spread of AP, $\mathrm{TE}_l=0.114$ $[0.056, 0.175]$ and $\mathrm{NIE}_l=0.090$ $[0.020, 0.155]$, while the interval of $\mathrm{NDE}_l$ includes zero; for the right spread no effect has an interval excluding zero. As noted in Section~\ref{subsec:mediation_general}, an interval for a direct effect that includes zero is not evidence of total mediation.

These are decompositions of associations. PK refers to the period before the course but was reported retrospectively in the same questionnaire as CI, AP summarizes tests taken during and at the end of the course, and the conditions for a causal reading are not verifiable. The contrast also requires extrapolation. For 6.6\% of the students the shifted center of PK exceeds 100, and 26.4\% of the students have an observed CI support whose upper end is already 10, so that any positive shift of the mediator moves their counterfactual support beyond the scale (maximum excess 0.73 points). The latter is a consequence of the spread model, which does not impose the upper bound of the scale, and not of the size of $\delta$. Truncating the counterfactual mediator to the scale of CI (Table~\ref{tab:mediation_sensitivity} in \ref{app:tables}) reduces $\mathrm{NIE}_m$ from 0.809 to 0.749 $[0.463, 1.069]$, and further restricting the average to the 83 students whose shifted PK support remains in $[0,100]$ gives 0.759 $[0.466, 1.109]$. $\mathrm{NDE}_m$ is unchanged and $\mathrm{NIE}_l$ remains 0.091 $[0.034, 0.143]$. The truncation mainly removes the small indirect contrast on the right spread (from 0.031 to 0.001) and part of the contribution of the spreads of CI (from 0.165 to 0.105). The conclusions on the center and on the left spread of AP therefore do not depend on the inadmissible counterfactual values.

\begin{table}[htbp]
\centering
\caption{Exploratory interaction blocks for the PK association. $\Delta R^2_{\mathcal{T}}$: in-sample increase over the model without the interaction block; $p$: Freedman--Lane permutation $p$-value ($P=999$); $p_{\mathrm{Holm}}$: Holm adjustment over the four blocks; CV$_{\mathcal{O}}$: leave-one-out distance of the model with and without the interaction block; $\kappa$: condition number of the design matrix.}
\label{tab:interaction_tests}
\small
\resizebox{\textwidth}{!}{%
\begin{tabular}{lcccccc}
\toprule
Interaction block & $\Delta R^2_{\mathcal{T}}$ & $p$ & $p_{\mathrm{Holm}}$ & CV$_{\mathcal{O}}$, with & CV$_{\mathcal{O}}$, without & $\kappa$ \\
\midrule
PK $\circ$ CR (crisp numerical) & 0.003 & 0.800 & 0.800 & 10.343 & 9.825 & 6.1e+03 \\
PK $\circ$ CI (fuzzy) & 0.009 & 0.372 & 0.744 & 10.533 & 9.825 & 1.1e+04 \\
PK $\circ$ SX (crisp binary) & 0.049 & 0.001 & 0.004 & 7.996 & 9.547 & 7.1e+02 \\
PK $\circ$ MOT (fuzzy) & 0.048 & 0.001 & 0.004 & 8.725 & 10.093 & 1.2e+04 \\
\bottomrule
\end{tabular}}
\end{table}

\paragraph{Moderation}\label{subsec:mod_app} Table~\ref{tab:interaction_tests} summarizes the four interaction blocks. The blocks PK$\circ$CR and PK$\circ$CI do not improve the fit ($\Delta R^2_{\Tsc}=0.003$ and 0.009, $p=0.800$ and 0.372) and worsen the leave-one-out distance (from 9.83 to 10.34 and 10.53); the conditional effects of PK at the quartiles of CR and CI have overlapping intervals (Table~\ref{tab:simple_slopes} and Figure~\ref{fig:simple_slopes} in \ref{app:tables}). The blocks PK$\circ$SX and PK$\circ$MOT have $p=0.001$, the smallest attainable value, and Holm-adjusted $p=0.004$; they increase $R^2_{\Tsc}$ by 0.049 and 0.048 and reduce the leave-one-out distance from 9.55 to 8.00 and from 10.09 to 8.73. In the SX model (Table~\ref{tab:moderation_sx} in \ref{app:tables}), the coefficient of $(\mathrm{PK}\circ\mathrm{SX})_m$ in the center equation is 0.090 $[0.041, 0.140]$, so that the slope of the center of AP on the center of PK is 0.042 for female students (SX $=0$) and 0.132 for male students (SX $=1$). In the MOT model (Table~\ref{tab:moderation_mot} in \ref{app:tables}), the coefficients of $(\mathrm{PK}\circ\mathrm{MOT})_m$ and $(\mathrm{PK}\circ\mathrm{MOT})_r$ in the center equation are 0.032 $[0.019, 0.048]$ and 0.072 $[0.013, 0.135]$; by \eqref{eq:cond_fuzzy}, with the spreads of MOT at their median of 1, the point estimate of the slope increases from 0.046 at the first quartile of the center of MOT (4) to 0.141 at the third quartile (7). The association of PK with the center of AP is therefore stronger for male students and for more motivated students. The interaction analyses are exploratory, and the two blocks with small $p$-values are two of the four tested; the condition numbers of the MOT and CI designs exceed $10^4$, and the coefficients of the main effects in these models refer to a moderator equal to zero and are not interpretable on their own.

\section{Discussion and conclusions}\label{sec:discussion}

The paper defines mediation and moderation within the componentwise distance-based regression model for triangular fuzzy data. Three methodological results are relevant for applications. Componentwise least squares minimizes the unconstrained quadratic extension of the Yang--Ko distance on the transformed scale, whatever its shape parameters, but not the distance on the original scale after projection, so goodness of fit and prediction should be assessed on the scale on which they are interpreted, and the closed-form leave-one-out distances make this inexpensive. When the mediator spreads are transformed and carry part of the indirect association, the product-of-coefficients rule no longer holds for the center of the response: the indirect effect depends on the observed spreads, and plug-in contrasts are needed. Effects on the spreads of the response are nonlinear in the parameters and have no product-of-coefficients analogue.

Under the calibrated data-generating process, the simulation supports percentile case-bootstrap intervals for the plug-in effects: bias was negligible, coverage ranged from 0.905 to 0.975, and the size of the indirect-effect tests did not exceed 0.085. Two results limit these conclusions. Indirect effects on the spreads are small relative to their sampling variability, and the effect on the right spread was detected in at most 40\% of the replications at $n=500$. The fuzzy model improved prediction over a centers-only model only from $n=250$, and the improvement was about 2\%; in the application, with $n=91$, the covariate spreads did not improve leave-one-out prediction. Ignoring the paths through the spreads of the mediator, however, biased the indirect effect on the center by 16 to 20\% at every sample size, whereas omitting the spreads of the predictor produced a bias of opposite sign. The fuzzy decomposition is therefore preferable for estimating the indirect effect when the mediator spreads carry part of it, even when it is not preferable for prediction; when they do not, a crisp estimator that keeps all components of the predictor is unbiased and more efficient.

In the application, the indirect contrast through the course indicator accounted for about half of the model-based association between prior knowledge and the center of academic performance, and an indirect contrast was also found for the left spread of academic performance. Prior knowledge was more strongly associated with academic performance among male students and among more motivated students. Truncating the counterfactual mediator to its scale reduced the indirect effect on the center by 7\%, and the contribution of the mediator spreads to that effect was not distinguishable from zero. The mediation analysis has no causal identification, the moderation analyses are exploratory and based on two blocks out of four, and the spread equations are estimated from integer spreads with little variation.

Further limitations concern the model. The $\log(1+\cdot)$ spread model guarantees nonnegative spreads but does not impose the upper bound of bounded rating scales; in the application, counterfactual mediator values fell outside the scale for a quarter of the students (truncating them reduced the indirect effect on the center by 7\%), and in the simulation the Gaussian data-generating process produced inadmissible values for 12 to 24\% of the units. A transformation of the distances of the support endpoints from both bounds would address this. The Dubois--Prade product used for fuzzy interactions is an approximation whose error is small only when the spreads are small relative to the centers; in the application, the error on the support endpoints of the interaction regressors is at most 3.7\% of the center for PK$\otimes$CI and 7.0\% for PK$\otimes$MOT, and only one of the 91 regressors PK$\otimes$MOT has a negative lower endpoint. The Freedman--Lane test is approximate; in the plasmode experiment its size was between 0.025 and 0.055, but it was assessed only for the two blocks of that design, under a correctly specified model with exchangeable residuals, and its level under heteroscedasticity was not studied. The fuzzy mediation and moderation methods of \citet{Yoon2020FuzzyModeration}, \citet{ShinEtAl2026FuzzyMultipleModeration} and \citet{KangEtAl2026BootstrapFuzzyMediation} estimate fuzzy-valued coefficients and indirect effects, whereas the effects of Definition~\ref{def:effects} are real-valued contrasts on each component of the response; since the two approaches estimate different quantities, a numerical comparison would require a common target that neither framework defines. Future work includes bounded spread transformations, sensitivity analysis for unmeasured mediator--outcome confounding, and categorical mediators through the multinomial formulation of Section~\ref{subsec:mediation_fuzzyX_crispZ}.

\section*{Data availability}
The data and the R code that reproduces all tables and figures are available from the corresponding author upon reasonable request.

\section*{Declaration of competing interest}
The authors declare that they have no known competing financial interests or personal relationships that could have appeared to influence the work reported in this paper.

\section*{CRediT authorship contribution statement}
\textbf{F.M.:} Conceptualization, Methodology, Software, Formal analysis, Writing -- original draft. \textbf{A.P.:} Methodology, Data curation, Validation, Writing -- review \& editing.

\section*{Declaration of generative AI and AI-assisted technologies in the manuscript preparation process}
During manuscript preparation, language editing tools were used only to improve readability and English style. The authors reviewed and revised the manuscript and take full responsibility for its scientific content.

\clearpage
\appendix

\section{Diagrams for the remaining configurations}\label{app:figures}
\setcounter{figure}{0}\setcounter{table}{0}
Figure~\ref{fig:app_diagrams} completes the diagrams of Sections~\ref{sec:mediation} and~\ref{sec:moderation} for the configurations not shown in the main text: mediation with a crisp predictor and a fuzzy mediator (panel a), mediation with a fuzzy predictor and a crisp categorical mediator (panel b), and moderation of a fuzzy predictor by a crisp numerical (panel c) or crisp categorical (panel d) moderator. In each panel, the upper diagram corresponds to the absence and the lower diagram to the presence of the effect.

\begin{figure}[H]
\centering
\begin{subfigure}[t]{0.39\textwidth}
\centering
\includegraphics[width=\textwidth]{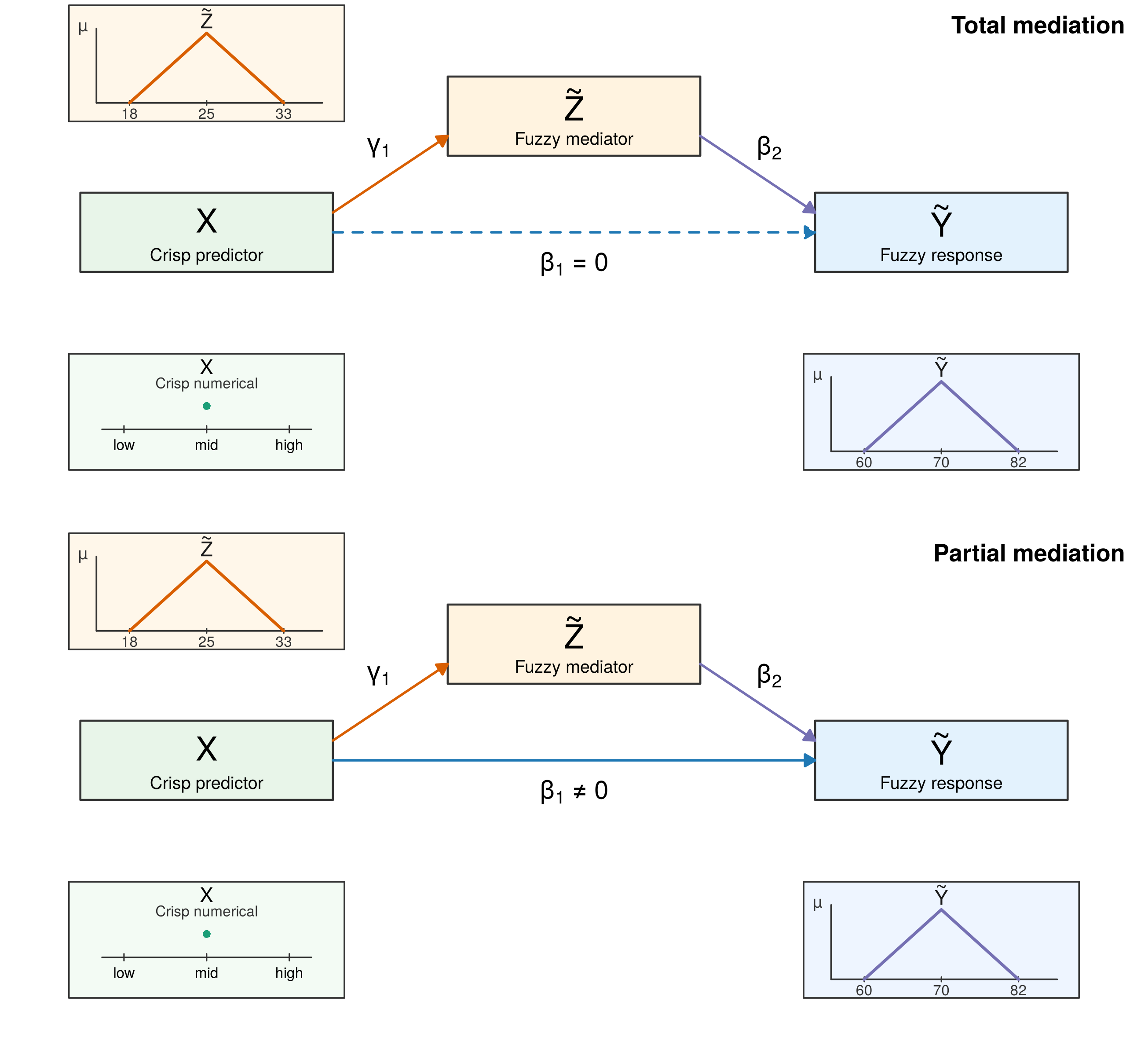}
\caption{Crisp predictor, fuzzy mediator.}
\label{fig:fuzzy_mediation_crispX_total_partial}
\end{subfigure}\hspace{0.04\textwidth}
\begin{subfigure}[t]{0.39\textwidth}
\centering
\includegraphics[width=\textwidth]{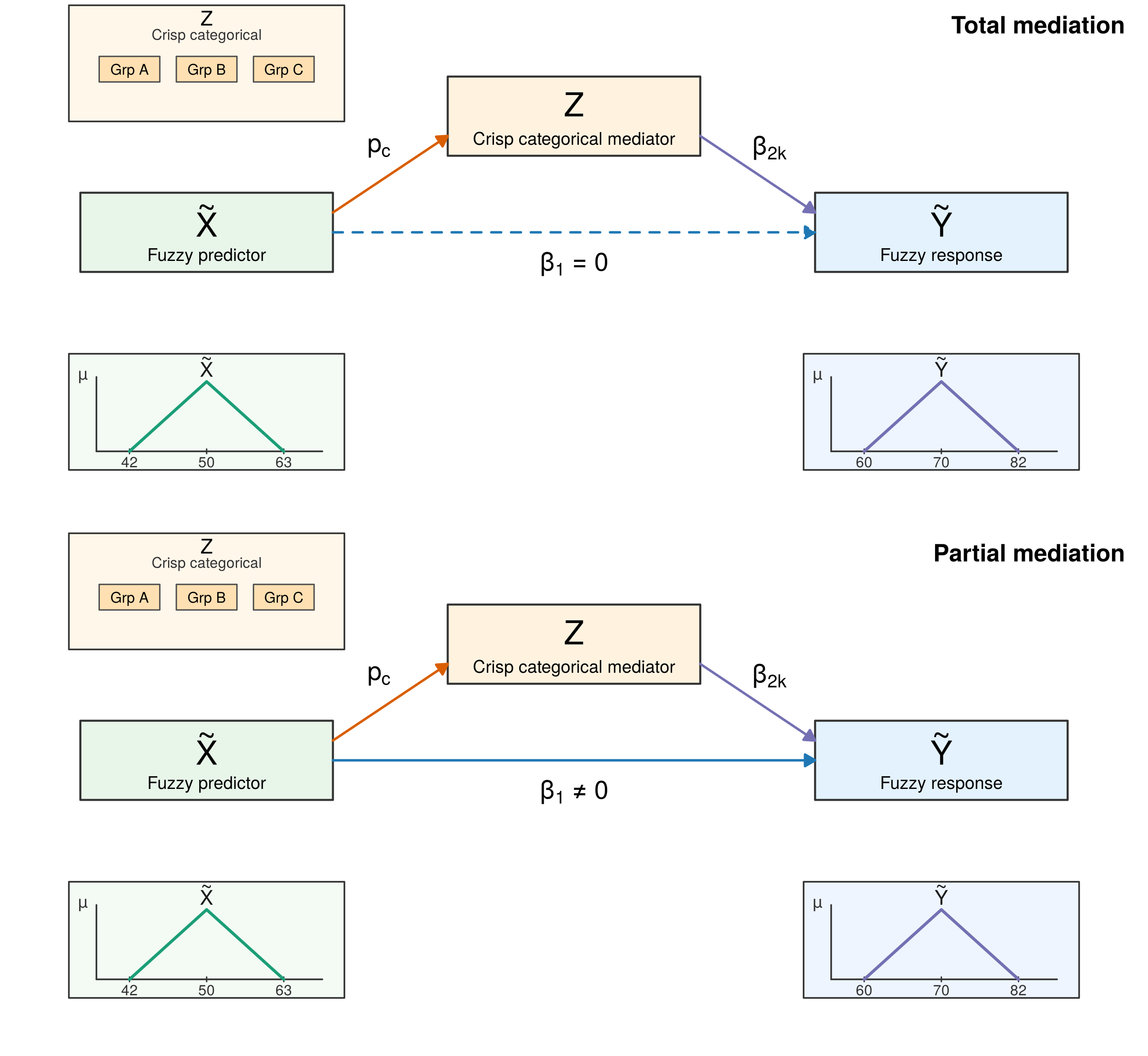}
\caption{Fuzzy predictor, crisp categorical mediator ($p_c$: category probabilities).}
\label{fig:fuzzy_mediation_fuzzyX_qualZ_total_partial}
\end{subfigure}

\vspace{0.6em}
\begin{subfigure}[t]{0.39\textwidth}
\centering
\includegraphics[width=\textwidth]{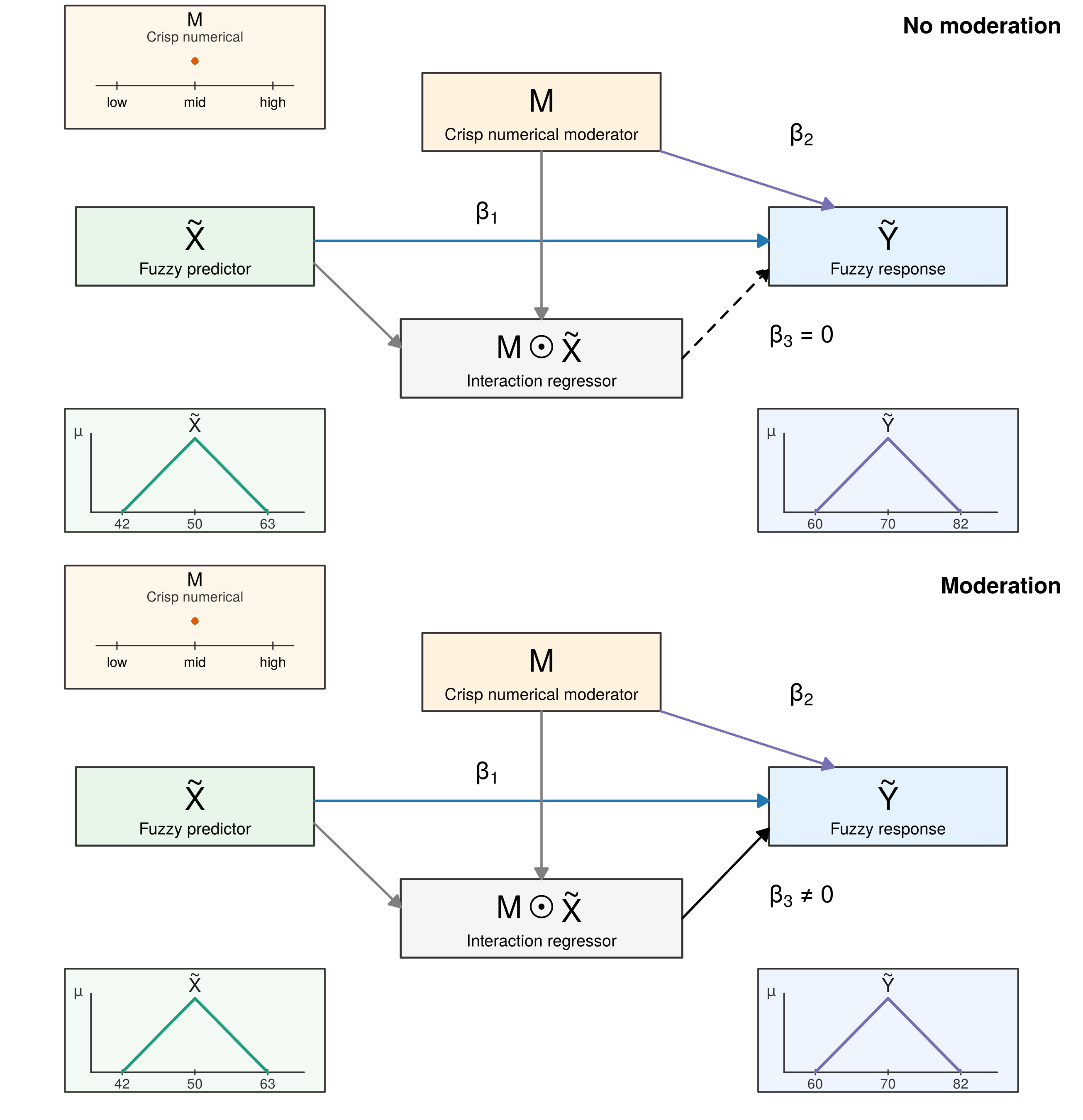}
\caption{Crisp numerical moderator, interaction regressor $M\odot\tilde{X}$: $\beta_3=0$ (upper) and $\beta_3\neq0$ (lower).}
\label{fig:fuzzy_moderation_fuzzy_crisp_num}
\end{subfigure}\hspace{0.04\textwidth}
\begin{subfigure}[t]{0.39\textwidth}
\centering
\includegraphics[width=\textwidth]{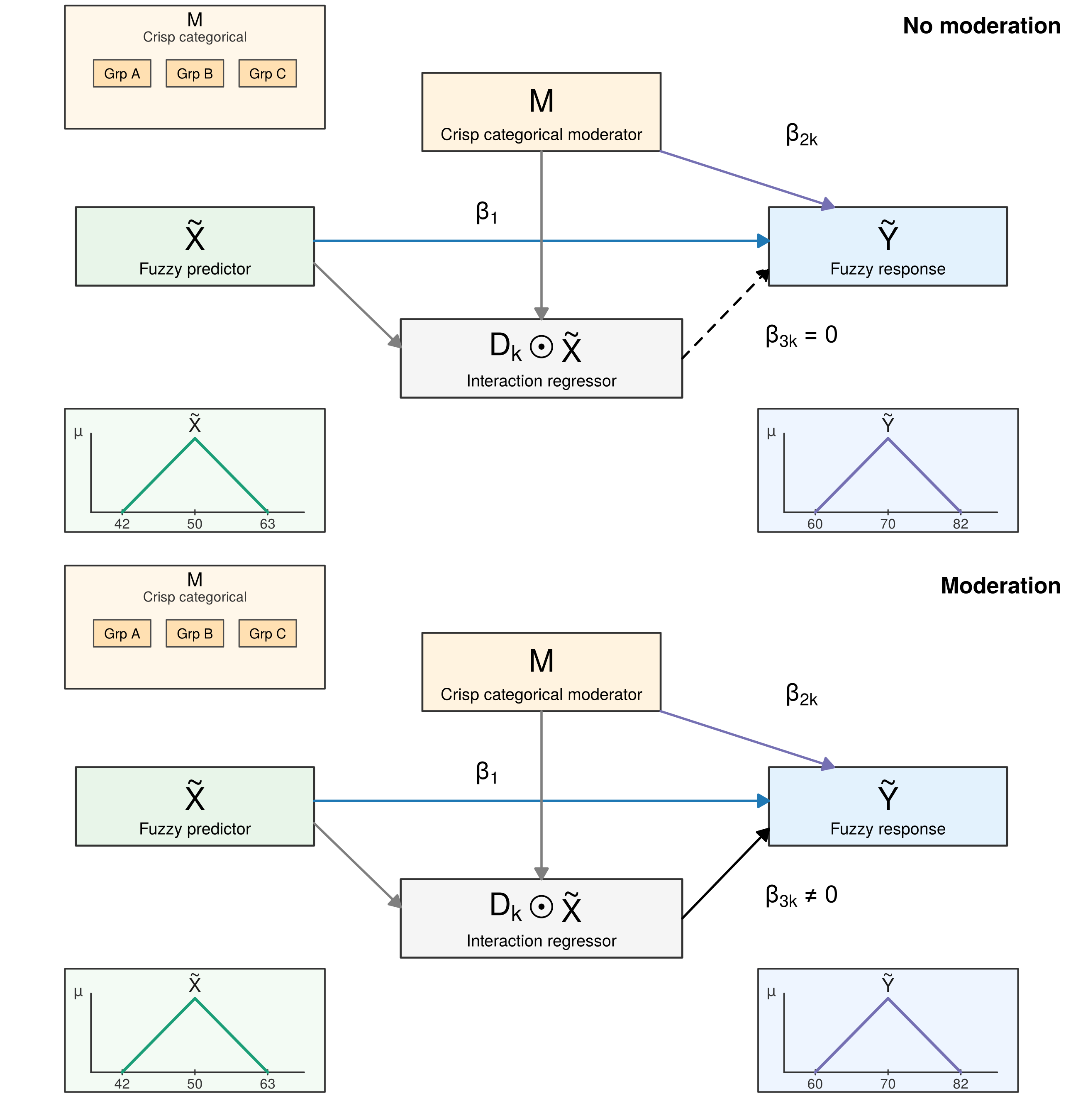}
\caption{Crisp categorical moderator, interaction regressors $D_k\odot\tilde{X}$: all $\beta_{3k}=0$ (upper) and at least one $\beta_{3k}\neq0$ (lower).}
\label{fig:fuzzy_moderation_qualitativeM}
\end{subfigure}
\caption{Mediation and moderation diagrams for configurations with a crisp predictor, mediator or moderator.}
\label{fig:app_diagrams}
\end{figure}

\clearpage
\section{Additional results of the application}\label{app:tables}
\setcounter{figure}{0}\setcounter{table}{0}
This appendix reports the results of Section~\ref{sec:application} that are not shown in the main text: the coefficients of the bivariate model (Table~\ref{tab:bivariate}), the conditional effects of PK for the CR and CI moderation models (Table~\ref{tab:simple_slopes} and Figure~\ref{fig:simple_slopes}), the coefficients of the four interaction models (Tables~\ref{tab:moderation_cr}--\ref{tab:moderation_mot}) and the sensitivity analysis of the mediation decomposition to the scale constraints (Table~\ref{tab:mediation_sensitivity}).

\begin{table}[htbp]
\centering
\caption{Bivariate model of AP on PK: componentwise estimates with 95\% percentile bootstrap intervals ($B=1000$). $R^2_{\mathcal{T}}=0.610$. Bold: interval excluding zero.}
\label{tab:bivariate}
\small
\begin{tabular}{lccc}
\toprule
Term & Center & $\log$ left spread & $\log$ right spread \\
\midrule
Intercept & 8.436 {\scriptsize [6.037, 10.536]} & 0.687 {\scriptsize [0.514, 0.858]} & 0.781 {\scriptsize [0.600, 0.976]} \\
$\mathrm{PK}_{m}$ & \textbf{0.157} {\scriptsize [0.130, 0.187]} & \textbf{0.004} {\scriptsize [0.002, 0.006]} & 0.003 {\scriptsize [-0.001, 0.006]} \\
$\mathrm{PK}_{l}$ & 0.260 {\scriptsize [-0.151, 0.661]} & \textbf{0.038} {\scriptsize [0.010, 0.065]} & -0.006 {\scriptsize [-0.036, 0.022]} \\
$\mathrm{PK}_{r}$ & -0.094 {\scriptsize [-0.501, 0.300]} & -0.007 {\scriptsize [-0.031, 0.017]} & 0.026 {\scriptsize [-0.004, 0.065]} \\
\bottomrule
\end{tabular}
\end{table}

\begin{table}[htbp]
\centering
\caption{Moderation on observed variables. Conditional effect of an increase of 10 points in the center of PK on the center and spreads of AP (original scale), with the moderator at its quartiles and the other covariates at their medians. 95\% percentile bootstrap intervals in brackets. The baseline model without interaction has CV$_{\mathcal{O}}=9.825$.}
\label{tab:simple_slopes}
\small
\resizebox{\textwidth}{!}{%
\begin{tabular}{lccc}
\toprule
Moderator level & AP center & AP left spread & AP right spread \\
\midrule
\multicolumn{4}{l}{\emph{PK $\circ$ CR (fuzzy $\times$ crisp numerical)}} \\
\multicolumn{4}{l}{\quad $\Delta R^2_{\mathcal{T}}=0.003$, permutation $p=0.800$, CV$_{\mathcal{O}}=10.343$} \\
\quad CR at Q1 & 0.701 {\scriptsize [0.251, 1.076]} & 0.015 {\scriptsize [-0.088, 0.125]} & 0.057 {\scriptsize [-0.091, 0.188]} \\
\quad CR at Median & 0.747 {\scriptsize [0.356, 1.108]} & 0.013 {\scriptsize [-0.077, 0.107]} & 0.067 {\scriptsize [-0.054, 0.190]} \\
\quad CR at Q3 & 0.764 {\scriptsize [0.372, 1.176]} & 0.012 {\scriptsize [-0.080, 0.112]} & 0.071 {\scriptsize [-0.044, 0.196]} \\
\multicolumn{4}{l}{\emph{PK $\circ$ CI (fuzzy $\times$ fuzzy)}} \\
\multicolumn{4}{l}{\quad $\Delta R^2_{\mathcal{T}}=0.009$, permutation $p=0.372$, CV$_{\mathcal{O}}=10.533$} \\
\quad CI at Q1 & 0.913 {\scriptsize [0.382, 1.318]} & 0.078 {\scriptsize [-0.022, 0.191]} & 0.025 {\scriptsize [-0.122, 0.175]} \\
\quad CI at Median & 0.790 {\scriptsize [0.379, 1.248]} & 0.042 {\scriptsize [-0.033, 0.153]} & 0.026 {\scriptsize [-0.120, 0.161]} \\
\quad CI at Q3 & 0.790 {\scriptsize [0.359, 1.236]} & 0.042 {\scriptsize [-0.050, 0.142]} & 0.026 {\scriptsize [-0.131, 0.162]} \\
\bottomrule
\end{tabular}}
\end{table}

\begin{figure}[!htbp]
\centering
\includegraphics[width=0.78\textwidth]{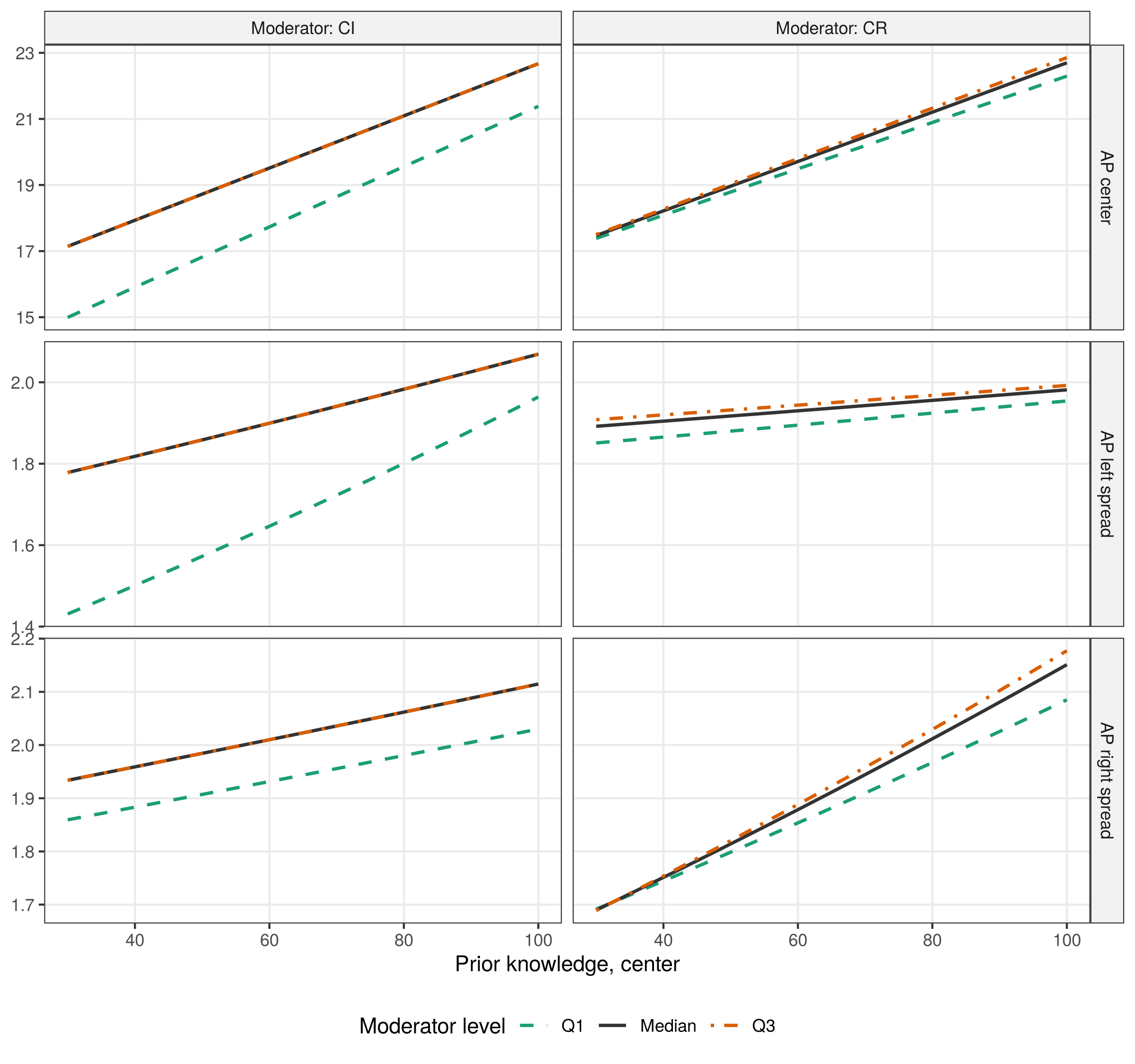}
\caption{Fitted center and spreads of AP as functions of the center of PK, with the moderator (center of CI, left; CR, right) at its first quartile, median and third quartile and the other covariates at their medians. The median and the third quartile of the center of CI are both equal to 7, so the corresponding curves coincide.}
\label{fig:simple_slopes}
\end{figure}

\begin{table}[htbp]
\centering
\caption{Moderation model PK $\circ$ CR (fuzzy $\times$ crisp numerical): componentwise estimates with 95\% percentile bootstrap intervals. $R^2_{\mathcal{T}}=0.768$; Freedman--Lane permutation test for the interaction block, $p=0.800$. Bold: interval excluding zero.}
\label{tab:moderation_cr}
\small
\begin{tabular}{lccc}
\toprule
Term & Center & $\log$ left spread & $\log$ right spread \\
\midrule
Intercept & 1.335 {\scriptsize [-4.091, 7.894]} & 0.578 {\scriptsize [-0.037, 1.122]} & 0.624 {\scriptsize [0.046, 1.287]} \\
$\mathrm{PK}_{m}$ & 0.062 {\scriptsize [-0.026, 0.123]} & 0.001 {\scriptsize [-0.005, 0.008]} & 0.001 {\scriptsize [-0.009, 0.008]} \\
$\mathrm{PK}_{l}$ & 0.595 {\scriptsize [-0.306, 1.486]} & 0.018 {\scriptsize [-0.076, 0.112]} & -0.003 {\scriptsize [-0.098, 0.097]} \\
$\mathrm{PK}_{r}$ & -0.293 {\scriptsize [-1.706, 0.767]} & -0.063 {\scriptsize [-0.149, 0.020]} & 0.046 {\scriptsize [-0.062, 0.226]} \\
$\mathrm{CI}_{m}$ & \textbf{1.689} {\scriptsize [1.225, 2.092]} & \textbf{0.074} {\scriptsize [0.035, 0.119]} & 0.029 {\scriptsize [-0.021, 0.069]} \\
$\mathrm{CI}_{l}$ & 0.277 {\scriptsize [-0.545, 1.252]} & 0.015 {\scriptsize [-0.055, 0.083]} & -0.055 {\scriptsize [-0.174, 0.039]} \\
$\mathrm{CI}_{r}$ & 0.378 {\scriptsize [-0.587, 1.328]} & 0.005 {\scriptsize [-0.075, 0.071]} & 0.052 {\scriptsize [-0.016, 0.126]} \\
$(\mathrm{PK}\circ\mathrm{CR})_{m}$ & 0.003 {\scriptsize [-0.009, 0.019]} & 0.000 {\scriptsize [-0.001, 0.001]} & 0.000 {\scriptsize [-0.001, 0.002]} \\
$(\mathrm{PK}\circ\mathrm{CR})_{l}$ & -0.103 {\scriptsize [-0.299, 0.079]} & 0.003 {\scriptsize [-0.018, 0.022]} & 0.000 {\scriptsize [-0.019, 0.019]} \\
$(\mathrm{PK}\circ\mathrm{CR})_{r}$ & 0.025 {\scriptsize [-0.174, 0.252]} & 0.011 {\scriptsize [-0.003, 0.025]} & -0.005 {\scriptsize [-0.033, 0.011]} \\
$\mathrm{CR}$ & 0.207 {\scriptsize [-1.029, 1.182]} & -0.032 {\scriptsize [-0.131, 0.088]} & 0.009 {\scriptsize [-0.110, 0.120]} \\
\bottomrule
\end{tabular}
\end{table}

\begin{table}[htbp]
\centering
\caption{Moderation model PK $\circ$ CI (fuzzy $\times$ fuzzy): componentwise estimates with 95\% percentile bootstrap intervals. $R^2_{\mathcal{T}}=0.775$; Freedman--Lane permutation test for the interaction block, $p=0.372$. Bold: interval excluding zero.}
\label{tab:moderation_ci}
\small
\begin{tabular}{lccc}
\toprule
Term & Center & $\log$ left spread & $\log$ right spread \\
\midrule
Intercept & 3.249 {\scriptsize [-5.780, 16.071]} & 0.195 {\scriptsize [-0.758, 1.108]} & -0.103 {\scriptsize [-1.458, 1.325]} \\
$\mathrm{PK}_{m}$ & 0.059 {\scriptsize [-0.151, 0.213]} & 0.005 {\scriptsize [-0.010, 0.020]} & 0.013 {\scriptsize [-0.007, 0.033]} \\
$\mathrm{PK}_{l}$ & -0.113 {\scriptsize [-0.530, 0.452]} & 0.019 {\scriptsize [-0.018, 0.058]} & 0.022 {\scriptsize [-0.028, 0.072]} \\
$\mathrm{PK}_{r}$ & -0.282 {\scriptsize [-0.798, 0.122]} & -0.026 {\scriptsize [-0.068, 0.007]} & 0.032 {\scriptsize [-0.014, 0.084]} \\
$\mathrm{CI}_{m}$ & 2.359 {\scriptsize [-0.108, 3.898]} & 0.166 {\scriptsize [-0.009, 0.340]} & 0.043 {\scriptsize [-0.180, 0.289]} \\
$\mathrm{CI}_{l}$ & -1.931 {\scriptsize [-5.141, 2.000]} & 0.012 {\scriptsize [-0.282, 0.318]} & 0.228 {\scriptsize [-0.226, 0.702]} \\
$\mathrm{CI}_{r}$ & -0.855 {\scriptsize [-3.534, 2.463]} & -0.233 {\scriptsize [-0.492, 0.061]} & 0.161 {\scriptsize [-0.213, 0.434]} \\
$(\mathrm{PK}\circ\mathrm{CI})_{m}$ & -0.012 {\scriptsize [-0.036, 0.027]} & -0.001 {\scriptsize [-0.004, 0.001]} & 0.000 {\scriptsize [-0.004, 0.003]} \\
$(\mathrm{PK}\circ\mathrm{CI})_{l}$ & 0.035 {\scriptsize [-0.028, 0.089]} & 0.000 {\scriptsize [-0.004, 0.004]} & -0.004 {\scriptsize [-0.011, 0.003]} \\
$(\mathrm{PK}\circ\mathrm{CI})_{r}$ & 0.018 {\scriptsize [-0.028, 0.059]} & 0.003 {\scriptsize [0.000, 0.007]} & -0.002 {\scriptsize [-0.006, 0.004]} \\
$\mathrm{CR}$ & 0.132 {\scriptsize [-0.059, 0.351]} & 0.003 {\scriptsize [-0.008, 0.016]} & 0.008 {\scriptsize [-0.008, 0.026]} \\
\bottomrule
\end{tabular}
\end{table}

\begin{table}[htbp]
\centering
\caption{Moderation model PK $\circ$ SX (fuzzy $\times$ crisp binary): componentwise estimates with 95\% percentile bootstrap intervals. $R^2_{\mathcal{T}}=0.826$; Freedman--Lane permutation test for the interaction block, $p=0.001$. Bold: interval excluding zero.}
\label{tab:moderation_sx}
\small
\begin{tabular}{lccc}
\toprule
Term & Center & $\log$ left spread & $\log$ right spread \\
\midrule
Intercept & 3.632 {\scriptsize [0.842, 6.777]} & 0.587 {\scriptsize [0.324, 0.852]} & 0.713 {\scriptsize [0.390, 0.989]} \\
$\mathrm{PK}_{m}$ & \textbf{0.042} {\scriptsize [0.002, 0.077]} & -0.001 {\scriptsize [-0.004, 0.002]} & 0.001 {\scriptsize [-0.003, 0.005]} \\
$\mathrm{PK}_{l}$ & 0.066 {\scriptsize [-0.291, 0.394]} & 0.016 {\scriptsize [-0.013, 0.043]} & -0.006 {\scriptsize [-0.034, 0.024]} \\
$\mathrm{PK}_{r}$ & 0.015 {\scriptsize [-0.321, 0.393]} & -0.007 {\scriptsize [-0.032, 0.020]} & 0.005 {\scriptsize [-0.032, 0.039]} \\
$\mathrm{CI}_{m}$ & \textbf{1.591} {\scriptsize [1.122, 2.007]} & \textbf{0.072} {\scriptsize [0.030, 0.114]} & 0.034 {\scriptsize [-0.020, 0.087]} \\
$\mathrm{CI}_{l}$ & 0.246 {\scriptsize [-0.594, 1.046]} & 0.005 {\scriptsize [-0.062, 0.070]} & -0.060 {\scriptsize [-0.154, 0.027]} \\
$\mathrm{CI}_{r}$ & 0.451 {\scriptsize [-0.361, 1.274]} & 0.009 {\scriptsize [-0.057, 0.064]} & 0.062 {\scriptsize [-0.005, 0.141]} \\
$(\mathrm{PK}\circ\mathrm{SX})_{m}$ & \textbf{0.090} {\scriptsize [0.041, 0.140]} & \textbf{0.005} {\scriptsize [0.001, 0.009]} & 0.002 {\scriptsize [-0.007, 0.010]} \\
$(\mathrm{PK}\circ\mathrm{SX})_{l}$ & -0.333 {\scriptsize [-1.008, 0.380]} & 0.035 {\scriptsize [-0.019, 0.096]} & 0.029 {\scriptsize [-0.071, 0.139]} \\
$(\mathrm{PK}\circ\mathrm{SX})_{r}$ & -0.491 {\scriptsize [-1.156, 0.207]} & 0.015 {\scriptsize [-0.040, 0.065]} & 0.046 {\scriptsize [-0.040, 0.156]} \\
$\mathrm{CR}$ & 0.132 {\scriptsize [-0.031, 0.324]} & 0.003 {\scriptsize [-0.010, 0.017]} & 0.008 {\scriptsize [-0.009, 0.027]} \\
$\mathrm{SX}$ & -2.843 {\scriptsize [-6.764, 1.247]} & \textbf{-0.533} {\scriptsize [-0.871, -0.168]} & -0.352 {\scriptsize [-0.807, 0.092]} \\
\bottomrule
\end{tabular}
\end{table}

\begin{table}[htbp]
\centering
\caption{Moderation model PK $\circ$ MOT (fuzzy $\times$ fuzzy): componentwise estimates with 95\% percentile bootstrap intervals. $R^2_{\mathcal{T}}=0.824$; Freedman--Lane permutation test for the interaction block, $p=0.001$. Bold: interval excluding zero.}
\label{tab:moderation_mot}
\small
\begin{tabular}{lccc}
\toprule
Term & Center & $\log$ left spread & $\log$ right spread \\
\midrule
Intercept & 18.092 {\scriptsize [9.331, 28.373]} & 1.044 {\scriptsize [0.191, 1.957]} & 0.749 {\scriptsize [-0.755, 2.021]} \\
$\mathrm{PK}_{m}$ & \textbf{-0.164} {\scriptsize [-0.309, -0.048]} & -0.007 {\scriptsize [-0.018, 0.004]} & 0.000 {\scriptsize [-0.016, 0.020]} \\
$\mathrm{PK}_{l}$ & 0.014 {\scriptsize [-0.443, 0.527]} & 0.042 {\scriptsize [-0.004, 0.093]} & 0.027 {\scriptsize [-0.033, 0.097]} \\
$\mathrm{PK}_{r}$ & \textbf{-0.584} {\scriptsize [-1.085, -0.084]} & \textbf{-0.045} {\scriptsize [-0.086, -0.009]} & 0.007 {\scriptsize [-0.058, 0.085]} \\
$\mathrm{CI}_{m}$ & \textbf{1.702} {\scriptsize [1.243, 2.119]} & \textbf{0.082} {\scriptsize [0.034, 0.125]} & 0.033 {\scriptsize [-0.020, 0.075]} \\
$\mathrm{CI}_{l}$ & 0.444 {\scriptsize [-0.268, 1.294]} & 0.007 {\scriptsize [-0.055, 0.072]} & -0.059 {\scriptsize [-0.166, 0.032]} \\
$\mathrm{CI}_{r}$ & 0.062 {\scriptsize [-0.731, 0.798]} & -0.015 {\scriptsize [-0.087, 0.046]} & 0.055 {\scriptsize [-0.011, 0.136]} \\
$\mathrm{MOT}_{m}$ & \textbf{-2.176} {\scriptsize [-3.427, -1.205]} & -0.078 {\scriptsize [-0.190, 0.019]} & -0.053 {\scriptsize [-0.198, 0.124]} \\
$\mathrm{MOT}_{l}$ & -0.939 {\scriptsize [-4.332, 2.698]} & 0.103 {\scriptsize [-0.299, 0.578]} & 0.329 {\scriptsize [-0.071, 0.712]} \\
$\mathrm{MOT}_{r}$ & \textbf{-4.874} {\scriptsize [-8.707, -1.105]} & \textbf{-0.407} {\scriptsize [-0.762, -0.057]} & -0.153 {\scriptsize [-0.634, 0.403]} \\
$(\mathrm{PK}\circ\mathrm{MOT})_{m}$ & \textbf{0.032} {\scriptsize [0.019, 0.048]} & 0.001 {\scriptsize [0.000, 0.002]} & 0.001 {\scriptsize [-0.001, 0.003]} \\
$(\mathrm{PK}\circ\mathrm{MOT})_{l}$ & 0.012 {\scriptsize [-0.048, 0.063]} & -0.002 {\scriptsize [-0.009, 0.004]} & -0.004 {\scriptsize [-0.010, 0.002]} \\
$(\mathrm{PK}\circ\mathrm{MOT})_{r}$ & \textbf{0.072} {\scriptsize [0.013, 0.135]} & \textbf{0.006} {\scriptsize [0.001, 0.011]} & 0.002 {\scriptsize [-0.006, 0.010]} \\
$\mathrm{CR}$ & 0.179 {\scriptsize [-0.027, 0.355]} & 0.004 {\scriptsize [-0.011, 0.019]} & 0.010 {\scriptsize [-0.007, 0.027]} \\
\bottomrule
\end{tabular}
\end{table}

\begin{table}[htbp]
\centering
\caption{Sensitivity of the mediation decomposition to the scale constraints, increase of 10 points in the center of PK. (a) Main analysis; (b) counterfactual mediator truncated to the scale $[0,10]$ of CI; (c) as (b), averaging only over the 83 of 91 students whose shifted PK support remains in $[0,100]$. 95\% percentile bootstrap intervals, $B=1000$.}
\label{tab:mediation_sensitivity}
\small
\resizebox{\textwidth}{!}{%
\begin{tabular}{lccc}
\toprule
Effect & (a) Main & (b) Truncated mediator & (c) Truncated mediator, admissible PK \\
\midrule
TE, center & 1.544 {\scriptsize [1.243, 1.863]} & 1.485 {\scriptsize [1.169, 1.798]} & 1.495 {\scriptsize [1.172, 1.823]} \\
NDE, center & 0.736 {\scriptsize [0.380, 1.070]} & 0.736 {\scriptsize [0.340, 1.070]} & 0.736 {\scriptsize [0.360, 1.076]} \\
NIE, center & 0.809 {\scriptsize [0.464, 1.190]} & 0.749 {\scriptsize [0.463, 1.069]} & 0.759 {\scriptsize [0.466, 1.109]} \\
\quad via CI$_l$ and CI$_r$ & 0.165 {\scriptsize [-0.090, 0.490]} & 0.105 {\scriptsize [-0.109, 0.350]} & 0.115 {\scriptsize [-0.119, 0.345]} \\
TE, left spread & 0.114 {\scriptsize [0.056, 0.175]} & 0.115 {\scriptsize [0.045, 0.189]} & 0.113 {\scriptsize [0.046, 0.185]} \\
NIE, left spread & 0.090 {\scriptsize [0.020, 0.155]} & 0.091 {\scriptsize [0.034, 0.143]} & 0.089 {\scriptsize [0.038, 0.142]} \\
TE, right spread & 0.090 {\scriptsize [-0.021, 0.189]} & 0.060 {\scriptsize [-0.050, 0.168]} & 0.067 {\scriptsize [-0.046, 0.166]} \\
NIE, right spread & 0.031 {\scriptsize [-0.059, 0.114]} & 0.001 {\scriptsize [-0.077, 0.065]} & 0.008 {\scriptsize [-0.061, 0.082]} \\
\bottomrule
\end{tabular}}
\end{table}

\clearpage
\section{Questionnaire}\label{app:questionnaire}
\setcounter{figure}{0}\setcounter{table}{0}
The questionnaire completed by the students in 2014 is reproduced below; only the layout has been adapted to the page. Items B1, C1 and E1 give the variables PK, MOT and CI of Section~\ref{subsec:data}, item A1 gives SX. Age (A2), final high-school grade (A3) and understanding of the course content (D1) were collected but are not used in this paper. Academic performance (AP) and course attendance (CR) are not part of the questionnaire, because they were recorded by the instructor.

\medskip
\noindent\rule{\textwidth}{0.8pt}
\begingroup\small\setlength{\parskip}{4pt}
\begin{center}

\textbf{Student questionnaire on academic background, motivation and course experience}\\
Computer Science course $\mid$ Prof.\ F. Maturo
\end{center}

\textbf{Purpose of the activity.} This questionnaire was designed to record students' subjective assessments without forcing them to choose a single exact number. The information can be used to introduce fuzzy sets, triangular fuzzy numbers and the analysis of imprecise evaluations during the Computer Science course. It also provides a basis for methodological research on associations between educational characteristics and students' perceptions.

\textbf{What is vagueness?} A judgment can be vague even when the question is understood perfectly. For example, you may consider your motivation to be ``around 7 out of 10'' while also accepting somewhat lower or higher values as reasonable descriptions. This is different from random variation: the degree of membership in a fuzzy set is not a probability.

A triangular fuzzy number describes such a judgment using three points: lower endpoint (L), most representative value or core (C), upper endpoint (U). The triangle rises from membership 0 at L to membership 1 at C, then falls to membership 0 at U. The left spread is C $-$ L; the right spread is U $-$ C. Unequal spreads are allowed.

\textbf{Example: an approximate judgment on a 0--10 scale.}

\begin{center}
\begin{tikzpicture}[x=1.15cm,y=1.3cm]
\foreach \t in {0,...,10} {\draw[gray!40,densely dotted] (\t,0)--(\t,1);
  \draw[thick] (\t,-0.06)--(\t,0.06); \node[below,font=\scriptsize] at (\t,-0.06) {\t};}
\draw[gray!60,dashed] (0,1)--(10,1);
\fill[gray!15] (3.2,0)--(6.4,1)--(8.7,0)--cycle;
\draw[very thick] (0,0)--(10,0);
\draw[very thick] (3.2,0)--(6.4,1)--(8.7,0);
\fill (3.2,0) circle (1.5pt); \fill (6.4,1) circle (1.5pt); \fill (8.7,0) circle (1.5pt);
\node[above,font=\scriptsize] at (3.2,0.12) {L = 3.2};
\node[above,font=\scriptsize\bfseries] at (6.4,1) {C = 6.4};
\node[above,font=\scriptsize] at (8.7,0.12) {U = 8.7};
\node[left,font=\scriptsize] at (0,0) {0}; \node[left,font=\scriptsize] at (0,1) {1};
\node[font=\scriptsize] at (-0.6,1.3) {$\mu$};
\end{tikzpicture}
\end{center}
Here L $=3.2$, C $=6.4$ and U $=8.7$; therefore, the left spread is 3.2 and the right spread is 2.3. These are degrees of compatibility with a judgment, not probabilities.

\textbf{How to respond.} For each subjective question, draw a triangle directly on the numbered scale: put its left foot at L, its peak at C (height 1) and its right foot at U. Mark the three numbers in the spaces provided below each scale. You may choose any point along the continuous axis, including values between the labeled ticks or decimal values. Your triangle may be asymmetric. If your answer is exact, L, C and U may coincide.

\medskip
\begin{center}\textbf{PART 1 $\cdot$ YOUR BACKGROUND AND INITIAL ASSESSMENT}\end{center}

\textbf{A. Background information.} These entries describe factual characteristics. No triangle is needed for this section.

\noindent\textbf{A1} Sex: \quad $\square$ Female \quad $\square$ Male \hfill \textbf{A2} Age (years): \underline{\hspace{2.5cm}}\\[6pt]
\textbf{A3} Final high-school grade: \underline{\hspace{2.5cm}} \hfill Survey year: 2014

\medskip
\textbf{B. Prior knowledge of computer science}

\textbf{B1} Before attending this course, how would you describe your knowledge of computer science?
0 $=$ no prior knowledge; 100 $=$ very extensive prior knowledge. Indicate the most representative level and the acceptable range around it.

\begin{center}\qscale{100}{10}\end{center}

\textbf{C. Motivation toward university studies}

\textbf{C1} How would you assess your overall motivation to pursue your university studies?
0 $=$ no motivation; 10 $=$ extremely high motivation. Draw the triangle that best represents your current judgment.

\begin{center}\qscale{10}{1}\end{center}

\medskip
\begin{center}\textbf{PART 2 $\cdot$ YOUR ASSESSMENT OF THE COURSE}\end{center}

\textbf{D. Understanding of course content}

\textbf{D1} How well do you feel you understand the main topics addressed in this Computer Science course?
0 $=$ no understanding; 10 $=$ excellent understanding. Include the range of values that you consider reasonably compatible with your assessment.

\begin{center}\qscale{10}{1}\end{center}

\textbf{E. Evaluation of the course}

\textbf{E1} Considering the usefulness of the course and the appropriateness of its programme, how satisfied are you with the course overall?
0 $=$ not at all satisfied; 10 $=$ completely satisfied. Give one overall assessment of these aspects, indicating both its most representative value and its acceptable range.

\begin{center}\qscale{10}{1}\end{center}

\textbf{Before handing in the form.} Please check that every subjective response includes a marked lower endpoint, a core and an upper endpoint, with L $\le$ C $\le$ U. Intermediate or decimal values are welcome; do not round solely to match the printed tick marks. There are no right or wrong subjective answers.

\medskip
\noindent\textit{Note.} The questionnaire was administered in Italian. The version reproduced above is an English translation prepared by the authors for this paper; the content, item order and response scales of the original form are unchanged.

\endgroup
\noindent\rule{\textwidth}{0.8pt}

\bibliographystyle{elsarticle-num-names}
\bibliography{references}

\end{document}